\let\leq\leqslant
\let\geq\geqslant
\let\subset\subseteq
\let\subsetneq\varsubsetneq
\def\setR{\mathbb{R}}
\def\calC{\mathcal{C}}
\def\calF{\mathcal{F}}
\def\calG{\mathcal{G}}
\def\calM{\mathcal{M}}
\def\calS{\mathcal{S}}
\DeclareMathOperator{\Reg}{Reg}
\DeclareMathOperator{\dom}{dom}
\DeclareMathSymbol{\upset}{\mathclose}{symbols}{34}
\DeclareMathSymbol{\downset}{\mathclose}{symbols}{35}
\begin{document}

\title{Extending partial representations of function graphs and permutation graphs}
\author{Pavel Klav\'{\i}k\inst{1}%\fnmsep\thanks{Supported by \ldots}
\and Jan Kratochv\'{\i}l\inst{1}%\fnmsep$^{\star}$
\and Tomasz Krawczyk\inst{2}%\fnmsep\thanks{Supported by Ministry of Science and Higher Education of Poland as part of ESF project GraDR EUROGIGA.}
\and Bartosz Walczak\inst{2}%\fnmsep$^{\star\star}$
}
%\author[P. Klav\'{\i}k]{Pavel Klav\'{\i}k}
%\author[J. Kratochv\'{\i}l]{Jan Kratochv\'{\i}l}
%\author[T. Krawczyk]{Tomasz Krawczyk}
%\author[B. Walczak]{Bartosz Walczak}
\institute{Department of Applied Mathematics,\\Faculty of Mathematics and Physics, Charles University,\\\email{\{klavik,honza\}@kam.mff.cuni.cz}\and Department of Theoretical Computer Science,\\Faculty of Mathematics and Computer Science, Jagiellonian University,\\\email{\{krawczyk,walczak\}@tcs.uj.edu.pl}}

\maketitle

\thispagestyle{plain}

\begin{abstract}
Function graphs are graphs representable by intersections of continuous real-valued functions on the interval $[0,1]$ and are known to be exactly the complements of comparability graphs.
As such they are recognizable in polynomial time.
Function graphs generalize permutation graphs, which arise when all functions considered are linear.

We focus on the problem of extending partial representations, which generalizes the recognition problem.
We observe that for permutation graphs an easy extension of Golumbic's comparability graph recognition algorithm can be exploited.
This approach fails for function graphs.
Nevertheless, we present a polynomial-time algorithm for extending a partial representation of a graph by functions defined on the entire interval $[0,1]$ provided for some of the vertices.
On the other hand, we show that if a partial representation consists of functions defined on subintervals of $[0,1]$, then the problem of extending this representation to functions on the entire interval $[0,1]$ becomes NP-complete.
\end{abstract}

\newcommand{\heading}[1]{\medbreak\par\noindent\hbox{\bf #1}\enspace}
\def\co{\hbox{\sf co}}
\def\comp{\hbox{\sf CO}}
\def\perm{\hbox{\sf PERM}}
\def\fun{\hbox{\sf FUN}}
\def\parfun{\hbox{\sf PartialFUN}}
\def\ext{\textsc{RepExt}}
\def\orientext{\textsc{OrientExt}}

\def\computationproblem#1#2#3{% {problem_name}{input}{output}
	\begin{center}
	\begin{tabular}{|rp{10cm}|}
	\hline
	{\bf Problem:\enspace}&#1\\
	{\bf Input:\enspace}&#2\\
	{\bf Output:\enspace}&#3\\
	\hline
	\end{tabular}
	\end{center}
}

\section{Introduction}

Geometric representations of graphs have been studied as part of graph theory from its very beginning.
Euler initiated the study of graph theory by studying planar graphs in the setting of three-dimensional polytopes.
The theorem of Kuratowski~\cite{kuratowski_minors} provides the first combinatorial characterization of planar graphs and can be considered as the start of modern graph theory.

In this paper we are interested in intersection representations, which assign geometric objects to the vertices of graphs and the edges are encoded by intersections of objects.
Formally, an intersection representation of $G$ is a mapping $\phi:V(G)\to\calS$ of the vertices of $G$ to a class $\calS$ of objects (sets) such that $\phi(u)\cap\phi(v)\neq\emptyset$ if and only if $uv\in E(G)$.
This way, for different classes $\calS$ we obtain various classes of representable graphs.
Classic examples include interval graphs, circle graphs, permutation graphs, string graphs, convex graphs, and function graphs~\cite{Golumbic2,egr}. As seen from these two monographs, geometric intersection graphs are intensively studied for their applied motivation, algorithmic properties, but also as a source of many interesting theoretical results that sometimes stimulate our curiosity (as one example we mention that string graphs requiring exponential number of crossing points in every representation are known, and yet the recognition problem is in NP).

Naturally, the recognition problem is the first one to consider. For most of the intersection defined classes the complexity of their recognition is known. 
For example, interval graphs can be recognized in linear time~\cite{boothlueker}, while recognition of string graphs is NP-complete~\cite{krat_string,sss_string}. Our goal is to study the easily recognizable classes and explore if 
the recognition problem becomes harder when extra conditions are given with the input.

\heading{Partial representations.}
A recent paper of Klav\'ik et al.~\cite{kkv} introduced a question of extending partial representations for classes of intersection graphs.
A \emph{partial representation} of $G$ is a representation $\phi:R\to\calS$ of the induced subgraph $G[R]$ for a set $R\subset V(G)$.
The problem $\ext(\calG)$ of partial representation extension for a class $\calG$ represented in a class $\calS$ is defined as follows: given a graph $G\in\calG$ and a partial representation $\phi:R\to\calS$ of $G$, decide whether there is a representation $\psi:V(G)\to\calS$ that \emph{extends} $\phi$, that is, such that $\psi|_R=\phi$.

%\computationproblem
%{$\ext(\calG)$ (Partial Representation Extension of $\calG$)}
%{A graph $G\in\calG$ and a partial representation $\phi:R\to\calS$ of $G$}
%{Does $G$ have a representation $\psi:V(G)\to\calS$ that extends $\phi$?}

The paper~\cite{kkv} investigates the complexity of the problem for intervals graphs (intersection graphs of intervals on a line) and presents an $O(n^2)$ algorithm for extending interval representations and an $O(nm)$ algorithm for extending proper interval representations of graphs with $n$ vertices and $m$ edges.
A recent result of Bl\"asius and Rutter~\cite{ignaz} solves the problem of extending interval representations in time $O(n+m)$, but the algorithm is involved.

A related problem of simultaneous graph representations was recently introduced by Jampani and Lubiw~\cite{jampani}: given two graphs $G$ and $H$ sharing common vertices $I=V(G)\cap V(H)$, decide whether there are representations $\phi$ of $G$ and $\psi$ of $H$ such that $\phi|_I=\psi|_I$.
Simultaneous representations are closely related to partial representation extension.
Namely, in many cases we can solve partial representation extension by introducing an additional graph and putting $I=R$.
On the other hand, if $|I|$ is small, then we can test all essentially different possible representations of $I$ and try to extend them to $V(G)$ and $V(H)$, which can give us a polynomial-time algorithm for fixed parameter $|I|$.

Several other problems have been considered in which a partial solution is given and the task is to extend it.
For example, every $k$-regular bipartite graph is $k$-edge-colorable, but if some edges are pre-colored, the extension problem becomes NP-complete even for $k=3$~\cite{fiala}, and even when the input is restricted to planar graphs~\cite{marx}.
For planar graphs, partial representation extension is solvable in linear time~\cite{angelini}.
Every planar graph admits a straight-line drawing, but extending such representation is NP-complete~\cite{patrigani}.

\heading{Permutation and function graphs.}
In this paper, we consider two classes of intersection graphs.
The class \fun\ of function graphs is represented by continuous monotone curves connecting two parallel lines in the plane.
In other words, a representation of a function graph assigns a continuous function $f:[0,1]\to\setR$ to every vertex of the graph.
The class \perm\ of permutation graphs, which is a subclass of \fun, is represented the same way by linear functions.
%An example of a function graph is presented in Fig.~\ref{fun_is_not_coco}.

A graph is a \emph{comparability graph} if it is possible to orient its edges transitively.
An orientation is \emph{transitive} if $u\to v$ and $v\to w$ imply $u\to w$.
Thus the relation $\to$ in a transitively oriented graph is a strict partial order.
We denote the class of comparability graphs by \comp.
A \emph{partial orientation} of a comparability graph is a transitive orientation of some of its edges.
The problem $\orientext$ is to decide whether we can orient the remaining edges to get a transitive orientation of the entire graph.

By \co\comp\ we denote the class of complements of comparability graphs.
We have the following relations: $\fun=\co\comp$ \cite{GRU} and $\perm=\comp\cap\co\comp$ \cite{perm}.
We derive a transitive ordering from a function graphs as follows: if two functions do not intersect, then one is on top of the other; thus we can order the functions from bottom to top.
For permutation graphs, we use the fact that $\perm=\co\perm$.

\heading{Our results.}
By a straightforward modification of the recognition algorithms of Golumbic \cite{Golumbic1,Golumbic2} and by the property $\perm=\comp\cap\co\comp$, we get the following.

\begin{proposition}\label{prop:comp_ext}
The problem\/ $\orientext$ can be solved in time\/ $O((n+m)\Delta)$ for graphs with\/ $n$ vertices, $m$\/ edges, and maximum degree\/ $\Delta$.
\end{proposition}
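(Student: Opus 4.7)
The plan is to modify Golumbic's recognition algorithm for comparability graphs, which is built around the $\Gamma$-relation on directed edges: for $(a,b)$ and $(c,d)$ sharing a common vertex, $(a,b)\,\Gamma\,(c,d)$ iff the other two endpoints are non-adjacent in $G$. In any transitive orientation, $(a,b)\,\Gamma\,(c,d)$ forces the two edges to appear with compatible orientations, so fixing one directed edge forces all edges in its $\Gamma^*$-class. Golumbic's algorithm orients these implication classes one by one using this forcing, detecting a non-comparability graph precisely when a class is forced to contain both $(u,v)$ and $(v,u)$; the total running time is $O((n+m)\Delta)$.

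The modification for $\orientext$ is direct: seed the $\Gamma$-BFS with the given pre-oriented edges, each in its prescribed orientation, instead of a single arbitrary seed. Then (i) run $\Gamma$-propagation from this seeded queue, outputting NO as soon as any edge is forced in both directions; (ii) if propagation terminates with no conflict, pick any still-unoriented edge, orient it arbitrarily, and resume propagation, exactly as in Golumbic's algorithm, until every edge is oriented; (iii) output YES.

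Correctness has two parts. A conflict in step (i) proves non-extendibility: the partial orientation is by hypothesis transitive, every transitive extension must agree with it, and hence must honour all $\Gamma$-forced consequences, which are then contradictory. Conversely, when no conflict arises, the input graph $G\in\comp$ and Golumbic's correctness argument applies to show that the seeded propagation followed by arbitrary orientation of the remaining implication classes yields a transitive orientation of $G$ that extends the partial one.

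For the running time, each directed edge participates in at most $O(\Delta)$ instances of the $\Gamma$-relation (one per each of its at most $\Delta$ neighbours), so the total work is $O((n+m)\Delta)$ as in Golumbic's analysis; replacing a single seed with multiple seeds does not inflate this bound, because the $\Gamma^*$-equivalence classes are the same regardless of the seeds chosen. The main subtlety to nail down carefully is the correctness of the completion phase in step (ii), namely that once the pre-orientation has propagated consistently, any arbitrary-choice extension does yield a transitive orientation; this is precisely what Gallai's structural analysis of transitive orientations of comparability graphs — on which Golumbic's algorithm already relies — guarantees.
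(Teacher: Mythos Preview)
Your proposal is correct and is essentially the paper's own argument: both modify Golumbic's $\Gamma$-propagation algorithm by seeding it with the pre-oriented edges in their prescribed directions, failing when a pre-oriented edge is forced the wrong way, and otherwise completing arbitrarily via Golumbic's correctness theorem. The only cosmetic difference is that the paper phrases the seeding as ``process the pre-oriented edges first in some fixed order'' rather than ``enqueue them all at once,'' which is immaterial since $\Gamma$-closure is order-independent.
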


\begin{proposition}\label{prop:perm_ext}
The problem\/ $\ext(\perm)$ can be solved in time\/ $O(n^3)$ for graphs with\/ $n$ vertices.
\end{proposition}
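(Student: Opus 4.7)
The plan is to exploit the identity $\perm=\comp\cap\co\comp$: a permutation representation of $G$ is, up to perturbing the $x$-coordinates, equivalent to specifying a transitive orientation $\tau_1$ of $G$ (the ``crosses left-to-right'' relation, recording which of two crossing segments has the smaller top $x$-coordinate) together with a transitive orientation $\tau_2$ of $\overline{G}$ (ordering non-crossing segments by which lies to the upper-left), so that $\tau_1\cup\tau_2$ is precisely the linear order on $V(G)$ induced by the top endpoints. I would first read off from $\phi$ the linear order on $R$ given by top endpoints, splitting it into a partial transitive orientation $\sigma_1$ of $G[R]$ (on edges) and $\sigma_2$ of $\overline{G}[R]$ (on non-edges).

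Next, I would invoke Proposition~\ref{prop:comp_ext} twice---on $G$ with partial orientation $\sigma_1$ and on $\overline{G}$ with $\sigma_2$---and return NO if either call fails. Otherwise, let $\tau_1,\tau_2$ be the resulting orientations. The main obstacle is then to prove that $\tau_1\cup\tau_2$ is a linear order on $V(G)$: I would argue that no directed $3$-cycle $u\to v\to w\to u$ can appear, since if all three arcs lie in the same $\tau_i$ then transitivity of $\tau_i$ is violated outright, while if (say) $u\to v,v\to w$ lie in $\tau_1$ and $w\to u\in\tau_2$, then transitivity of $\tau_1$ forces $uw$ to be an edge of $G$ oriented $u\to w$, contradicting $w\to u\in\tau_2$ (and symmetrically with $\tau_1,\tau_2$ swapped). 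Hence $\tau_1\cup\tau_2$ is a transitive tournament, i.e., a linear order $<_{\mathrm{top}}$ that extends the top order on $R$ determined by $\phi$ by construction.

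Given $<_{\mathrm{top}}$, the extended representation is built by placing each new vertex $v$ at a top $x$-coordinate strictly between the top $x$-coordinates (from $\phi$) of $v$'s immediate predecessor and successor in $<_{\mathrm{top}}$ from $R$, and analogously on the bottom, using the linear order obtained by combining the reverse of $\tau_1$ with $\tau_2$. The dominant cost comes from the two calls to Proposition~\ref{prop:comp_ext}: $O((n+m)\Delta)=O(n^3)$ on $G$, and $O((n+\overline{m})\Delta(\overline{G}))=O(n^3)$ on $\overline{G}$ since $\overline{m}=O(n^2)$ and $\Delta(\overline{G})=O(n)$, yielding the claimed $O(n^3)$ bound.
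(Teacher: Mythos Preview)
Your proposal is correct and follows essentially the same route as the paper: read the partial orientations of $G$ and $\overline{G}$ from the top-endpoint order of $\phi$, run the $\orientext$ algorithm of Proposition~\ref{prop:comp_ext} on each, and reconstruct the segment representation from the two linear orders $\tau_1\cup\tau_2$ (top) and $\tau_1^{-1}\cup\tau_2$ (bottom). Your explicit no-$3$-cycle argument for why $\tau_1\cup\tau_2$ is a linear order is a detail the paper leaves implicit (citing the Even--Pnueli--Lempel characterization), but otherwise the two arguments coincide.
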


Our first main result is a polynomial-time algorithm for $\ext(\fun)$.
Here the straightforward generalization of the recognition algorithm does not work.
Even though $\fun=\co\comp$, the problems $\ext(\fun)$ and $\orientext$ are different, see Fig.~\ref{fun_is_not_coco}.
This is similar to what happens for the classes of proper and unit interval graphs: they are known to be equal, but their partial representation extension problems are different~\cite{kkv}.

\begin{theorem}\label{thm:fun_ext}
The problem\/ $\ext(\fun)$ can be solved in polynomial time.
\end{theorem}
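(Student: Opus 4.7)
The plan is to reduce $\ext(\fun)$ to a strengthened form of $\orientext$ on the complement $\overline{G}$ of $G$, and then invoke Proposition~\ref{prop:comp_ext}. Exploiting $\fun=\co\comp$, any representation $\psi:V(G)\to\calS$ induces a transitive orientation of $\overline{G}$ by directing each non-edge $uv$ of $G$ according to whether $\psi(u)$ lies above or below $\psi(v)$ throughout $[0,1]$, which is well defined since non-intersecting continuous functions on a connected domain are strictly ordered. The first step is therefore to translate the partial representation $\phi$ into a partial orientation $\pi_{0}$ of $\overline{G}$ by comparing $\phi(u)$ and $\phi(v)$ for every non-adjacent pair $u,v\in R$.

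As Fig.~\ref{fun_is_not_coco} warns, merely extending $\pi_{0}$ to a transitive orientation is insufficient, because $\phi$ also pins down the position of each non-$R$ vertex relative to the prescribed $R$-curves. The central step is to enlarge $\pi_{0}$ to a partial orientation $\pi$ that captures all such forced choices. To this end I would analyze the planar arrangement of the curves $\{\phi(v):v\in R\}$ in the strip $[0,1]\times\setR$: these curves subdivide the strip into finitely many $x$-monotone faces, and for each $u\notin R$ the set $N_G(u)\cap R$ restricts any extension $\psi(u)$ to an $x$-monotone path through a specific subfamily of faces. Whenever every admissible path for $u$ stays strictly above (respectively below) a particular $\phi(v)$ with $uv\notin E(G)$, one orients the edge $uv$ of $\overline{G}$ accordingly in $\pi$. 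Iterating this propagation, together with transitive closure, yields $\pi$ in polynomial time.

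The remaining steps are comparatively routine: apply Proposition~\ref{prop:comp_ext} to decide whether $\pi$ extends to a transitive orientation $\sigma$ of $\overline{G}$; if not, answer ``no.'' If $\sigma$ exists, reconstruct a representation by placing each non-$R$ curve in the family of faces prescribed by $\sigma$ and adding the required crossings among non-$R$ curves by local perturbation. The correctness proof hinges on the equivalence ``$\phi$ extends in $\fun$ iff $\pi$ extends in $\comp$''; the forward direction is immediate from the construction of $\pi$, while the reverse direction is the main obstacle and demands a realization lemma asserting that any transitive extension $\sigma$ of $\pi$ can be simultaneously realized by continuous curves matching $\phi$ on $R$. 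Establishing this lemma --- essentially showing that all geometric obstructions have already been absorbed into $\pi$ by the face-based propagation --- is where the bulk of the technical work will lie.
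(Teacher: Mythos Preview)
Your reduction to $\orientext$ has a genuine gap: the ``realization lemma'' you isolate as the crux is in fact false, and no amount of face-based propagation into $\pi$ can repair it. The geometric constraints imposed by $\phi$ are not exhausted by forcing orientations of individual edges of $\overline{G}$. After every forced orientation has been added to $\pi$, what remains is a collection of binary choices (``is $\psi(u)$ above or below $\phi(a)$?'') each of which is individually feasible. The obstruction then comes from pairs $u,u'\in V(G)\setminus R$ with $uu'\in E(G)$: their curves must meet, hence their regions (in the sense of Section~\ref{sec:poset}) must intersect, and whether they do can depend \emph{jointly} on the choice made for $u$ and the choice made for $u'$. Such a constraint is a genuine 2-clause, not an implication that $\orientext$ can propagate through transitivity or the $\Gamma$-relation. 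Concretely, one can arrange $R$-curves so that $u$ has two feasible regions $R_u^+,R_u^-$ and $u'$ has two feasible regions $R_{u'}^+,R_{u'}^-$ with $R_u^+\cap R_{u'}^+=\emptyset=R_u^-\cap R_{u'}^-$ but both mixed intersections non-empty; neither choice for $u$ nor for $u'$ is forced on its own, yet only two of the four combinations are realizable. A call to $\orientext$ on your $\pi$ may return a transitive extension picking a bad combination, and you will then be unable to realize it.

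The paper's proof accordingly takes a different route. It first establishes a region lemma (Lemma~\ref{lem:regions}): for a \emph{fixed} transitive orientation $P$ of $\overline{G}$, the map $\phi$ extends iff $\Reg_P(u)\cap\Reg_P(v)\neq\emptyset$ for every pair incomparable in $P$. The problem is then to search over orientations $P$, and here modular decomposition is essential. After three reductions that prune non-represented parts of the module tree (Lemma~\ref{lem:reductions}), every non-singleton strong module admits at most two transitive orientations respecting $\phi$ (Lemma~\ref{lem:no-of-orientations}), and $\Reg_P(u)$ depends only on the orientation of the parent module of $\{u\}$ (Lemma~\ref{lem:reg-dependence}). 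Hence each region-intersection test constrains at most two boolean variables, and the whole question becomes a 2-SAT instance. Your face analysis recovers precisely the unit clauses of this 2-SAT; what it cannot see are the binary clauses, and those are where the work lies.
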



\begin{figure}[tb]
\centering\ifx\JPicScale\undefined\def\JPicScale{1}\fi
\psset{unit=\JPicScale mm}
\psset{linewidth=0.3,dotsep=1,hatchwidth=0.3,hatchsep=1.5,shadowsize=1,dimen=middle}
\psset{dotsize=0.7 2.5,dotscale=1 1,fillcolor=black}
\psset{arrowsize=1 2,arrowlength=1,arrowinset=0.25,tbarsize=0.7 5,bracketlength=0.15,rbracketlength=0.15}
\begin{pspicture}(0,0)(86,23.2)
\rput{0}(6,4){\psellipse[linestyle=none,fillstyle=solid](0,0)(1,1)}
\rput{0}(18,12){\psellipse[linestyle=none,fillstyle=solid](0,0)(1,1)}
\rput{0}(11,20){\psellipse[linestyle=none,fillstyle=solid](0,0)(1,1)}
\rput{0}(3,12){\psellipse[linestyle=none,fillstyle=solid](0,0)(1,1)}
\rput(21,12){$a$}
\rput(8,20){$b$}
\rput(0,12){$c$}
\rput(18,4){$u$}
\rput(28,21){$a$}
\rput(28,16){$b$}
\rput(28,9){$c$}
\rput(3,4){$d$}
\rput(17,16.5){}
\rput(28,4){$d$}
\psline(30,0)(30,23.2)
\psline(58,0)(58,23.2)
\psline[linecolor=white](0,0)(0,2)
\psline(3,12)(11,20)
\psline(11,20)(18,12)
\psline(18,12)(3,12)
\psline(6,4)(15,4)
\psline(15,4)(11,20)
\newrgbcolor{userFillColour}{0.6 0.6 0.6}
\rput{0}(15,4){\psellipse[linestyle=none,fillcolor=userFillColour,fillstyle=solid](0,0)(1,1)}
\rput{0}(71,4){\psellipse[linestyle=none,fillstyle=solid](0,0)(1,1)}
\rput{0}(83,12){\psellipse[linestyle=none,fillstyle=solid](0,0)(1,1)}
\rput{0}(76,20){\psellipse[linestyle=none,fillstyle=solid](0,0)(1,1)}
\rput{0}(68,12){\psellipse[linestyle=none,fillstyle=solid](0,0)(1,1)}
\rput(86,12){$a$}
\rput(73,20){$b$}
\rput(83,4){$u$}
\rput(68,4){$d$}
\rput(82,16.5){}
\newrgbcolor{userFillColour}{0.6 0.6 0.6}
\rput{0}(80,4){\psellipse[linestyle=none,fillcolor=userFillColour,fillstyle=solid](0,0)(1,1)}
\rput(65,12){$c$}
\psline{->}(70.4,5.4)(68.4,10.6)
\psline{->}(71.4,5.4)(75.6,18.6)
\psline{->}(72.2,4.8)(82,11)
\newrgbcolor{userLineColour}{0.6 0.6 0.6}
\psline[linecolor=userLineColour]{->}(78.8,4.6)(69.2,11.2)
\newrgbcolor{userLineColour}{0.6 0.6 0.6}
\psline[linecolor=userLineColour]{->}(80.4,5.2)(82.6,10.6)
\pscustom[linewidth=0.25]{\psbezier(30,21)(33.5,21)(35.6,19.8)(37,17)
\psbezier(38.4,14.2)(39.75,12.4)(41.5,11)
\psbezier(43.25,9.6)(44.75,9.6)(46.5,11)
\psbezier(48.25,12.4)(49.6,14.35)(51,17.5)
\psbezier(52.4,20.65)(53.75,22)(55.5,22)
\psbezier(57.25,22)(58,22)(58,22)
}
\pscustom[linewidth=0.25]{\psbezier(30,9)(30,9)(30.75,9)(32.5,9)
\psbezier(34.25,9)(35.6,10.35)(37,13.5)
\psbezier(38.4,16.65)(39.75,18.45)(41.5,19.5)
\psbezier(43.25,20.55)(44.6,20.55)(46,19.5)
\psbezier(47.4,18.45)(48.75,16.65)(50.5,13.5)
\psbezier(52.25,10.35)(53.75,9)(55.5,9)
\psbezier(57.25,9)(58,9)(58,9)
}
\pscustom[linewidth=0.25]{\psbezier(30,16)(32.8,15.65)(34.75,16.32)(36.5,18.25)
\psbezier(38.25,20.18)(39.75,19.65)(41.5,16.5)
\psbezier(43.25,13.35)(44.75,13.35)(46.5,16.5)
\psbezier(48.25,19.65)(49.75,20.18)(51.5,18.25)
\psbezier(53.25,16.32)(54.6,15.58)(56,15.75)
\psbezier(57.4,15.92)(58,16)(58,16)
}
\psline[linewidth=0.25](30,4)(58,4)
\newrgbcolor{userLineColour}{0.6 0.6 0.6}
\pscustom[linewidth=0.25,linecolor=userLineColour]{\psbezier(30,1)(32.8,1.7)(34.75,3.35)(36.5,6.5)
\psbezier(38.25,9.65)(39.75,11.9)(41.5,14)
\psbezier(43.25,16.1)(44.75,16.1)(46.5,14)
\psbezier(48.25,11.9)(49.75,9.65)(51.5,6.5)
\psbezier(53.25,3.35)(54.6,1.85)(56,1.5)
\psbezier(57.4,1.15)(58,1)(58,1)
}
\rput(60,1){$u$}
\rput(53,1.5){$\mathbf{?}$}
\end{pspicture}
\caption{A function graph $G$ with a partial representation that is not extendable: $\phi(u)$ in order to intersect $\phi(b)$ and $\phi(d)$ must also intersect $\phi(a)$ or $\phi(c)$.
The corresponding partial orientation of the comparability graph $\overline G$ is extendable.}
\label{fun_is_not_coco}
\end{figure}

The second main result concerns partial representations by \emph{partial functions} $f:[a,b]\to\setR$ with $[a,b]\subset[0,1]$, which generalize ordinary partial representations by functions.
The problem $\ext^\star(\fun)$ is to decide, for a given graph, whether a given partial representation by partial functions can be extended to a representation of the whole graph so that all partial functions are extended to functions defined on the entire $[0,1]$.

\begin{theorem}\label{thm:partial_fun_ext}
The problem\/ $\ext^\star(\fun)$ is NP-complete.
\end{theorem}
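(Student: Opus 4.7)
\emph{Membership in NP.} I would argue that if an extension exists, then one exists of polynomial combinatorial complexity. Let $X$ be the set of x-coordinates at which a partial function starts or ends or at which two partial functions cross; so $|X|=O(n^{2})$. Between consecutive points of $X$, one can reshape any valid extension to be piecewise linear, and distribute the required unrealised crossings in that gap in any order. A certificate is then the sequence of permutations of the $n$ function values at the points of $X$ together with, for each gap between consecutive points, a list of pairs of functions that cross there. The verifier checks that successive permutations differ exactly by the transpositions listed, that the union of all listed crossings equals $E(G)$, and that the values prescribed by the given partial functions at the points of $X$ are consistent with the guessed permutations. All of this runs in polynomial time.

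\emph{NP-hardness.} My plan is to reduce from 3-SAT. The intuition is that partial functions act as ``walls'' blocking off regions of the strip $[0,1]\times\setR$; a function still to be drawn must be routed through the gaps between walls, and each such gap encodes a binary decision. I would construct three kinds of gadget. For each variable $x_i$, a \emph{variable gadget} is a cluster of partial functions leaving exactly two openings through which a dedicated ``wire vertex'' can be continued; the two openings represent the assignments $x_i=\text{true}$ and $x_i=\text{false}$. Between a variable gadget and the clause gadgets of the clauses in which $x_i$ or its negation occurs, a \emph{wire gadget} forms a thin channel made of nested partial functions, forcing the wire vertex to stay on a definite side of each obstacle and thus transporting the chosen truth value. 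A \emph{clause gadget} consists of partial functions arranged so that an extension of the three incoming wires is possible if and only if at least one of them enters on its ``satisfying'' side; the all-false entry would force a crossing between a wire vertex and a partial-function vertex that is non-adjacent to it in $G$.

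The graph $G$ produced by the reduction contains all partial-function vertices plus one vertex per wire, with edges exactly reflecting the intended crossings inside the gadgets. Given such gadgets, every satisfying assignment yields an extension by routing each wire vertex through the corresponding opening, while any valid extension induces a binary choice at each variable gadget, and these choices must satisfy every clause.

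The main obstacle, as in most representation-extension hardness proofs, lies in the gadget design rather than the overall scheme. Three specific difficulties arise. First, different wires necessarily cross one another many times in the layout, so every intended crossing has to be pre-planted as an edge of $G$ and no further crossing may be geometrically unavoidable. Second, the clause gadget must be rigid enough that an all-false entry really is forbidden, which I would arrange by packing partial functions so tightly that the only way to avoid a non-neighbour intersection is to enter through a ``satisfying'' opening. Third, the extensions must genuinely be \emph{functions} of $x$ rather than arbitrary curves, which excludes the usual topological tricks available for string-graph variants and is what makes $\ext^{\star}(\fun)$ delicate; careful use of monotonicity to rule out unintended routings is where I expect the technical work to concentrate.
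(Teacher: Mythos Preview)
Your NP-membership sketch is plausible but heavier than needed. The paper uses $\fun=\co\comp$: guess a transitive orientation $P$ of $\overline G$ respecting $\phi$, then check in polynomial time whether $\Reg_P(u)\cap\Reg_P(v)\neq\emptyset$ for every incomparable pair (Lemma~\ref{lem:regions-partial}). No permutation-sequence certificate is required.

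For hardness, your plan and the paper's reduction are genuinely different. The paper does not route wire vertices through walls. Instead it engineers $\overline G$ so that its modular decomposition contains, for each variable $x_i$, a \emph{serial} module $X_i$ with two children $X_i^+,X_i^-$ (its two transitive orientations encode the two truth values), and for each clause $\alpha_j$, a serial module $A_j$ with children $A_j^1,A_j^2,A_j^3,T_j$ (whose orientation picks which literal ``satisfies'' the clause). Every vertex gets a short \emph{constant} partial function; two horizontal barriers $\phi(p)\equiv 1$ and $\phi(q)\equiv 2$ separate variable pieces (below) from clause pieces (above). The placement is such that $\Reg_P(\alpha_{j,*}^k)$ reaches below $\phi(q)$ only over $\dom\phi(A_j^k)$, and $\Reg_P(x_{i,*}^\mp)$ reaches above $\phi(p)$ over $\dom\phi(X_i^\pm)$ only when $X_i^\pm<_PX_i^\mp$. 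The region-intersection condition of Lemma~\ref{lem:regions-partial} then becomes exactly ``each clause has a satisfied literal.'' The coupling between gadgets is purely through the choice of orientation at serial modules; nothing physically travels from one gadget to another.

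Your proposal, by contrast, defers precisely the part that carries the argument: you say the gadget design is ``where the technical work concentrates'' but give no construction. The difficulties you flag are real, and in the function setting the second one---forcing an all-false entry to fail---does not obviously succumb to wall-packing, because a free vertex can be drawn at any height over an interval where no non-neighbour's partial function lives; ruling out escapes needs a global region argument of the kind the paper carries out, not a local topological one. As it stands, you have a credible heuristic for hardness but not a proof, and the paper's orientation-based mechanism is both the missing idea and a substantially cleaner route than geometric routing.
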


\section{Extending partial orientation of a comparability graph}\label{sec:comparability}

We show how to modify the recognition algorithm of Golumbic~\cite{Golumbic1,Golumbic2} to obtain an algorithm for extending partial orientation of a comparability graph.
The recognition algorithm repeats the following until the whole graph $G$ is oriented.
We pick an arbitrary unoriented edge and orient it in any direction.
This may force several other edges to be oriented, according to the following rules:
\begin{itemize}
\item $u\to v$ and $v\to w$ force $u\to w$,
\item $u\to v$, $vw\in E(G)$, and $uw\notin E(G)$ force $w\to v$.
\end{itemize}
If we find an edge that we are forced to reorient (change its direction), then we stop and answer that $G$ is not a comparability graph.
Otherwise, we finally orient all edges and obtain a transitive orientation of $G$.
The running time of the algorithm is $O((n+m)\Delta)$.

Now, we adapt this algorithm to the problem $\orientext$.
Since the algorithm processes edges in an arbitrary order, we can choose an ordering $e_1<\ldots<e_m$ of the edges and always pick the first non-oriented edge in this ordering.
Suppose that the first $k$ edges $e_1,\dots,e_k$ are preoriented by $\phi$.
If we pick an edge $e_i$, then we orient it according to $\phi$ if $i\leq k$ or arbitrarily otherwise.
The algorithm additionally fails if it is forced to orient an edge $e_i$ with $i\leq k$ in the opposite
direction to the one forced by $\phi$.
In such a case, this orientation is forced by the orientation of $e_1,\dots,e_{i-1}$, and thus the partial orientation is indeed not extendible.
The running time of the algorithm is again $O((n+m)\Delta)$, which proves Proposition~\ref{prop:comp_ext}.
The detailed proof of Proposition \ref{prop:perm_ext} is in the Appendix.

\section{Extending partial representation of a poset}\label{sec:poset}

Before we deal with function representations of graphs, we study representations of posets by continuous functions $[0,1]\to\setR$.
By a poset we mean a transitively oriented graph.
We write $u<_Pv$ to denote that there is an edge from $u$ to $v$ in a poset $P$.
Since we are interested in algorithmic problems, we have to choose some discrete description of the functions, and the particular choice does not matter as long as we can convert from and to other descriptions in polynomial time.
Here we restrict our attention to piecewise linear continuous functions.
Specifically, each function $f:[0,1]\to\setR$ that we consider is described by a tuple $(x_0,f(x_0)),\ldots,(x_k,f(x_k))$ of points in $[0,1]\times\setR$ with $0=x_0<\ldots<x_k=1$ so that $f$ is linear on every interval $[x_i,x_{i+1}]$.
We denote the family of such functions by $\calF$.
Note that every representation by continuous functions $[0,1]\to\setR$ can be approximated by an equivalent representation by functions from $\calF$.
We define a natural order $<$ on $\calF$ by setting $f<g$ whenever $f(x)<g(x)$ holds for every $x\in[0,1]$.
A \emph{representation} of a poset $P$ is a mapping $\phi:V(P)\to\calF$ such that
\[\forall u,v\in V(P)\colon(u<_Pv\iff\phi(u)<\phi(v)).\]
It is worth to note that every poset has a representation of this kind, see \cite{GRU}.
A \emph{partial representation} of a poset $P$ is a mapping $\phi:R\to\calF$ which is a representation of the subposet $P[R]$ induced on a set $R\subset V(P)$.

In this section we provide a polynomial-time algorithm solving the following problem: given a poset $P$ and its partial representation $\phi:R\to\calF$, decide whether $\phi$ is extendable to a representation of $P$.
Thus for the remainder of this section we assume that $P$ is a poset, $R\subset V(P)$, and $\phi:R\to\calF$ is a partial representation of $P$.
%For convenience, we assume that the coordinates $x_0,\ldots,x_k$ used to describe functions $\phi(a)$ are common for all $a\in R$.

For a function $f\in\calF$ we define
\begin{align*}
f\upset&=\{(x,y)\in[0,1]\times\setR:y>f(x)\},\\
f\downset&=\{(x,y)\in[0,1]\times\setR:y<f(x)\}.
\end{align*}
For every vertex $u$ of $P$ we define a set $\Reg(u)\subset[0,1]\times\setR$, called the \emph{region} of $u$, as follows.
If $u\in R$, then $\Reg(u)=\phi(u)$.
Otherwise,
\[\Reg(u)=\bigcap\{\phi(a)\downset:a\in R\text{ and }a>_Pu\}\cap\bigcap\{\phi(a)\upset:a\in R\text{ and }a<_Pu\}.\]
It follows that the function representing $u$ in any representation of $P$ extending $\phi$ must be contained entirely within $\Reg(u)$.
See Fig.~\ref{fig:regions} for an illustration.

\begin{figure}[tb]
\centering\ifx\JPicScale\undefined\def\JPicScale{1}\fi
\psset{unit=\JPicScale mm}
\psset{linewidth=0.3,dotsep=1,hatchwidth=0.3,hatchsep=1.5,shadowsize=1,dimen=middle}
\psset{dotsize=0.7 2.5,dotscale=1 1,fillcolor=black}
\psset{arrowsize=1 2,arrowlength=1,arrowinset=0.25,tbarsize=0.7 5,bracketlength=0.15,rbracketlength=0.15}
\begin{pspicture}(0,0)(90,26)
\newrgbcolor{userLineColour}{0.8 0.8 0.8}
\newrgbcolor{userFillColour}{0.8 0.8 0.8}
\pscustom[linecolor=userLineColour,fillcolor=userFillColour,fillstyle=solid]{\psline(35.5,9)(35.6,22.6)
\psline(35.6,22.6)(38.8,18.8)
\psline(38.8,18.8)(40.5,17)
\psline(40.5,17)(55.4,24.8)
\psline(55.4,24.8)(55.5,19)
\psline(55.5,19)(55.5,12)
\psline(55.5,12)(51.5,15)
\psline(51.5,15)(45.5,9)
\psline(45.5,9)(40.5,12)
\psline(40.5,12)(35.5,9)
\psbezier(35.5,9)(35.5,9)(35.5,9)
\psbezier(35.5,9)(35.5,9)(35.5,9)
\closepath}
\psline[linewidth=0.2]{->}(6.8,5)(18,16)
\psline[linewidth=0.2]{<-}(19,15.8)(19,5.4)
\psline[linewidth=0.2]{<-}(9.4,19.6)(10,15.4)
\psline[linewidth=0.2]{->}(20,5)(25,10)
\psline[linewidth=0.2]{->}(6.4,5.2)(9.6,12.6)
\rput{0}(6,4){\psellipse[linestyle=none,fillstyle=solid](0,0)(1,1)}
\newrgbcolor{userFillColour}{0.6 0.6 0.6}
\rput{0}(26,11){\psellipse[linestyle=none,fillcolor=userFillColour,fillstyle=solid](0,0)(1,1)}
\rput{0}(19,17){\psellipse[linestyle=none,fillstyle=solid](0,0)(1,1)}
\rput{0}(9.2,20.8){\psellipse[linestyle=none,fillstyle=solid](0,0)(1,1)}
\psline[linewidth=0.2]{<-}(10.8,13)(18,5)
\newrgbcolor{userFillColour}{0.6 0.6 0.6}
\rput{0}(10,14){\psellipse[linestyle=none,fillcolor=userFillColour,fillstyle=solid](0,0)(1,1)}
\rput{0}(19,4){\psellipse[linestyle=none,fillstyle=solid](0,0)(1,1)}
\rput(3.5,4){$a$}
\rput(16,4){$b$}
\rput(6.8,21){$c$}
\rput(23.5,11){$u$}
\rput(12,14){$v$}
\rput(34,6){$a$}
\rput(34,9){$b$}
\rput(34,23){$c$}
\rput(16.4,17.4){$d$}
\rput(17,16.5){}
\rput(34,17){$d$}
\psline[linewidth=0.1,linestyle=dashed,dash=1 1](40.5,2.8)(40.5,26)
\rput(41,1){$x_1$}
\rput(48.5,1){$x_2$}
\rput(52,1){$x_3$}
\rput(56,1){$x_4$}
\rput(36,1){$x_0$}
\pscustom[]{\psline(35.5,5.8)(40.5,3.8)
\psline(40.5,3.8)(51.5,14.8)
\psline(51.5,14.8)(55.5,11.8)
\psbezier(55.5,11.8)(55.5,11.8)(55.5,11.8)
}
\pscustom[]{\psline(35.5,8.8)(40.5,11.8)
\psline(40.5,11.8)(51.5,4.8)
\psline(51.5,4.8)(55.5,6.8)
\psbezier(55.5,6.8)(55.5,6.8)(55.5,6.8)
}
\psline(35.5,16.8)
(48,24.6)(55.5,18.8)
\pscustom[]{\psline(35.5,22.8)(40.5,16.8)
\psline(40.5,16.8)(55.5,24.8)
\psbezier(55.5,24.8)(55.5,24.8)(55.5,24.8)
}
\newrgbcolor{userLineColour}{0.6 0.6 0.6}
\psline[linecolor=userLineColour](35.5,12.3)
(46,12.2)
(51.4,20.6)
(55.4,22.6)(55.5,16.8)
\psline(35.5,2.8)(35.5,26)
\psline(55.5,2.8)(55.5,26)
\newrgbcolor{userLineColour}{0.6 0.6 0.6}
\rput(42,14){}
\newrgbcolor{userLineColour}{0.6 0.6 0.6}
\rput(44.4,15.4){\tiny{$\Reg(v)$}}
\newrgbcolor{userLineColour}{0.6 0.6 0.6}
\rput(57.5,22.5){$v$}
\psline[linewidth=0.1,linestyle=dashed,dash=1 1](48,2.8)(48,26)
\psline[linewidth=0.1,linestyle=dashed,dash=1 1](51.5,2.8)(51.5,26)
\rput(66.5,6){$a$}
\rput(66.5,9){$b$}
\rput(66.5,23){$c$}
\rput(66.5,17){$d$}
\psline[linewidth=0.1,linestyle=dashed,dash=1 1](73,2.8)(73,26)
\rput(73.5,1){$x_1$}
\rput(81,1){$x_2$}
\rput(84.5,1){$x_3$}
\rput(88.5,1){$x_4$}
\rput(68.5,1){$x_0$}
\pscustom[]{\psline(68,5.8)(73,3.8)
\psline(73,3.8)(84,14.8)
\psline(84,14.8)(88,11.8)
\psbezier(88,11.8)(88,11.8)(88,11.8)
}
\pscustom[]{\psline(68,8.8)(73,11.8)
\psline(73,11.8)(84,4.8)
\psline(84,4.8)(88,6.8)
\psbezier(88,6.8)(88,6.8)(88,6.8)
}
\psline(68,16.8)
(80.5,24.6)(88,18.8)
\pscustom[]{\psline(68,22.8)(73,16.8)
\psline(73,16.8)(88,24.8)
\psbezier(88,24.8)(88,24.8)(88,24.8)
}
\newrgbcolor{userLineColour}{0.6 0.6 0.6}
\psline[linecolor=userLineColour](68,12.3)
(78.5,12.2)
(83.9,20.6)
(87.9,22.6)(88,16.8)
\psline(88,2.8)(88,26)
\newrgbcolor{userLineColour}{0.6 0.6 0.6}
\rput(74.5,14){}
\newrgbcolor{userLineColour}{0.6 0.6 0.6}
\rput(90,22.5){$v$}
\psline[linewidth=0.1,linestyle=dashed,dash=1 1](80.5,2.8)(80.5,26)
\psline[linewidth=0.1,linestyle=dashed,dash=1 1](84,2.8)(84,26)
\newrgbcolor{userLineColour}{0.6 0.6 0.6}
\pscustom[linecolor=userLineColour]{\psline(88,9)(82.5,10.5)
\psline(82.5,10.5)(68,26)
\psbezier(68,26)(68,26)(68,26)
}
\newrgbcolor{userLineColour}{0.6 0.6 0.6}
\rput(90,9){$u$}
\psline[linecolor=white](0,0)(0,2)
\psline(68,2.8)(68,26)
\psline[linewidth=0.2]{->}(6,5.5)(8.6,19.6)
\psline[linewidth=0.2]{->}(18.6,5.2)(10.2,19.8)
\end{pspicture}
\caption{A poset $P$ and its partial representaion $\phi:\{a,b,c,d\}\to\calF$.
The diagram in the middle shows $\Reg(y)$ and a feasible $\psi(v)$ for a representaion $\psi$ of $P$ extending $\phi$.
The diagram to the right shows a representaion $\psi$ of $P$ extending $\phi$.}
\label{fig:regions}
\end{figure}


\begin{lemma}\label{lem:regions}
There is a representation of\/ $P$ extending\/ $\phi$ if and only if any two incomparable vertices\/ $u$ and\/ $v$ of\/ $P$ satisfy\/ $\Reg(u)\cap\Reg(v)\neq\emptyset$.
\end{lemma}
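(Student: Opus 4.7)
The proof naturally splits into necessity (easy) and sufficiency (the main content), which I would prove by induction on $k:=|V(P)\setminus R|$.

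\emph{Necessity.} Suppose $\psi$ extends $\phi$. For any $u\in V(P)$ and any $a\in R$ with $a>_Pu$, the representation property forces $\psi(u)<\phi(a)$ pointwise, so the graph of $\psi(u)$ lies in $\phi(a)\downset$; symmetrically for predecessors. Hence the graph of $\psi(u)$ is contained in $\Reg(u)$. If $u$ and $v$ are incomparable in $P$, then neither $\psi(u)<\psi(v)$ nor $\psi(v)<\psi(u)$ holds pointwise, so continuity and the intermediate value theorem force their graphs to share a point, which lies in $\Reg(u)\cap\Reg(v)$.

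\emph{Sufficiency.} The base case $k=0$ is vacuous. For the inductive step, I would select a vertex $v$ that is minimal in the induced subposet $P[V(P)\setminus R]$ and construct a function $f\in\calF$ satisfying: (a)~the graph of $f$ is contained in $\Reg(v)$; (b)~$f$ meets $\phi(a)$ for every $a\in R$ incomparable with $v$; (c)~the regions condition persists when $\phi$ is enlarged to $\phi':=\phi\cup\{v\mapsto f\}$ on $R':=R\cup\{v\}$. Given such $f$, the induction hypothesis applied to $(P,\phi')$ yields the desired $\psi$. The minimality of $v$ simplifies~(c) substantially: every non-$R$ vertex comparable with $v$ lies strictly above it in $P$, so when passing from $R$ to $R'$ the region of such a vertex $u$ is replaced by $\Reg(u)\cap f\upset$, while regions of vertices incomparable with $v$ are left unchanged. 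Condition~(c) therefore reduces to the requirement that, for every pair $u,w$ of incomparable vertices in $V(P)\setminus R'$ with at least one of them strictly above $v$, the nonempty set $\Reg(u)\cap\Reg(w)$ still contains a point strictly above the graph of $f$.

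The construction of $f$ is then guided by keeping it as low as possible. Concretely, I would let $f$ follow a piecewise-linear approximation of the lower envelope of $\Reg(v)$, with localised upward spikes at reserved $x$-coordinates $x_a$ that touch $\phi(a)$ for each $a\in R$ with $a$ incomparable to $v$; such $x_a$ exist because $\Reg(v)\cap\phi(a)\neq\emptyset$ by hypothesis. The witnesses needed by~(c) are then located inside $\Reg(u)\cap\Reg(w)$ away from the spike neighbourhoods, exploiting the fact that transitivity together with the inclusions $\{b:b<_Pv\}\subset\{b:b<_Pu\}$ and $\{c:c>_Pu\}\subset\{c:c>_Pv\}$ valid for $u>_Pv$ make the upper envelope of $\Reg(u)\cap\Reg(w)$ dominate the lower envelope of $\Reg(v)$ on a rich enough subset of $[0,1]$.

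The main obstacle is this compatibility check: verifying that the spikes dictated by condition~(b) can be localised so as not to cover every potential witness demanded by condition~(c), and that the required witnesses can be produced simultaneously for all relevant pairs. This is where the regions hypothesis has to be exploited in two distinct ways, both to obtain individual witness points and to control how low $f$ must be kept throughout $[0,1]$. Once $f$ is constructed and $\phi'$ is verified to be a valid partial representation satisfying the regions condition relative to $R'$, the induction closes the proof.
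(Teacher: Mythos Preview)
Your necessity argument is fine and matches the paper. For sufficiency your inductive approach is genuinely different from the paper's, and while it can in principle be pushed through, the claimed reduction of condition~(c) is incomplete. You assert that (c) reduces to checking pairs $u,w\in V(P)\setminus R'$ with at least one of them above $v$, but two further cases remain. First, for $w\notin R'$ incomparable with $v$ you need $\Reg'(v)\cap\Reg'(w)=f\cap\Reg(w)\neq\emptyset$; your condition~(b) only forces $f$ to meet $\phi(a)$ for $a\in R$, not $\Reg(w)$ for unrepresented $w$. Second, for $a\in R$ incomparable with $v$ and with some $w>_Pv$, $w\notin R'$, you need $\phi(a)\cap\Reg(w)\cap f\upset\neq\emptyset$; this pair has one member in $R'$ and so is excluded from your reduced~(c), yet the spike you plant to touch $\phi(a)$ might sit precisely where $\phi(a)$ enters $\Reg(w)$. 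Both omissions are repairable---add further localised detours for the first, and choose spike locations and the gap $\epsilon$ above the lower envelope small enough for the second---but the bookkeeping you flag as ``the main obstacle'' is genuinely heavier than the sketch acknowledges, and you have not yet shown that all the required localisations can be made simultaneously.

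The paper sidesteps this entirely by abandoning the induction. It builds all functions $\psi(u)$ at once, coordinate by coordinate: at each common breakpoint $x_i$ of the given functions, and at one fresh coordinate $x_{uv}$ per incomparable pair $u,v$, chosen so that $(x_{uv},y_{uv})\in\Reg(u)\cap\Reg(v)$. At each such $x$ the values $\psi(\cdot)(x)$ are produced by a one-line lemma---any $P$-compatible partial map $\sigma\colon S\to\setR$ extends to a $P$-compatible total map $\tau\colon V(P)\to\setR$---applied with $S=R$ at the breakpoints and $S=R\cup\{u,v\}$, $\sigma(u)=\sigma(v)=y_{uv}$ at the fresh coordinates. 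Linear interpolation between the finitely many defined coordinates then yields the representation. Because every incomparable pair gets its own private $x$-coordinate where the crossing is forced, there is no compatibility check at all; you may find this both shorter and easier to make rigorous than completing your inductive scheme.
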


\noindent The proof is in the Appendix.

Lemma \ref{lem:regions} directly yields a polynomial-time algorithm for deciding whether $P$ has a representation extending $\phi$.
Indeed, the lower or upper boundary of $\Reg(u)$ (if exists) is a function from $\calF$ whose description can be easily computed from the descriptions of the functions $\phi(a)$ with $a\in R$ and $a<_Pu$ or $a>_Pu$, respectively.
Having the descriptions of the lower and upper boundaries of all regions, we can easily check whether the intersection of any two of them is empty.

%\begin{theorem}
%There is a polynomial-time algorithm that for every poset\/ $P$ and its partial representation\/ $\phi:R\to\calF$ with\/ $R\subset V(P)$ decides whether\/ $P$ has a representation extending\/ $\phi$.
%\end{theorem}

\section{Modular decomposition}\label{sec:decomposition}

The main tool that we use for constructing a polynomial-time algorithm for extending partial representations of function graphs is modular decomposition, also known as substitution decomposition.
In this section we briefly discuss this concept and its connection to transitive orientations of graphs.

A graph is \emph{empty} if it has no edges.
A graph is \emph{complete} if it has all possible edges.
%We denote the complement of a graph $G$ by $\overline{G}$.
%We denote the subgraph of a graph $G$ induced on a set $S\subset V(G)$ by $G[S]$.
A set $M\subset V(G)$ is a \emph{module} of $G$ if every vertex in $V(G)-M$ is adjacent to either all or none of the vertices in $M$.
The singleton sets and the whole $V(G)$ are the \emph{trivial} modules of $G$.
A non-empty graph is \emph{prime} if it has no modules other than the trivial ones.
A module $M$ is \emph{strong} if every module $N$ satisfies $N\subset M$, $M\subset N$, or $M\cap N=\emptyset$.
We denote the family of non-singleton strong modules of $G$ by $\calM(G)$.
A strong module $M\subsetneq V(G)$ is \emph{maximal} if there is no strong module $N$ with $M\subsetneq N\subsetneq V(G)$.
When $G$ is a graph with at least two vertices, the maximal strong modules of $G$ form a partition of $V(G)$, which we denote by $\calC(G)$.
It is easy to see that a set $M\subsetneq V(G)$ is a strong module of $G$ if and only if $M$ is a strong module of $G[N]$ for some $N\in\calC(G)$.
Applying this observation recursively, we see that the strong modules of $G$ form a rooted tree, called the \emph{modular decomposition} of $G$, in which
\begin{itemize}
\item $V(G)$ is the root;
\item $\calC(G[M])$ are the children of every $M\in\calM(G)$;
\item the singleton modules are the leaves.
\end{itemize}
In particular, $G$ has at most $2|V(G)|-1$ strong modules in total.

Any two distinct strong modules $M,N\subsetneq V(G)$ can be either \emph{adjacent}, which means that any two vertices $u\in M$ and $v\in N$ are adjacent in $G$, or \emph{non-adjacent}, which means that no two vertices $u\in M$ and $v\in N$ are adjacent in $G$.
When $M$ and $N$ are two adjacent strong modules of $G$ and $P$ is a transitive orientation of $G$, we write $M<_PN$ to denote that $u<_Pv$ for all $u\in M$ and $v\in N$.

\begin{theorem}[Gallai \cite{Gallai}]\label{thm:orientation}
Let\/ $M$ and\/ $N$ be two adjacent strong modules of\/ $G$.
Every transitive orientation\/ $P$ of\/ $G$ satisfies either\/ $M<_PN$ or\/ $M>_PN$.
\end{theorem}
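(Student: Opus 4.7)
The plan is to argue by contradiction, with an induction on $|M|+|N|$; the base case $|M|=|N|=1$ is immediate because only a single edge connects $M$ and $N$.

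Suppose $P$ is a transitive orientation for which the conclusion fails, so there exist $u_1,u_2\in M$ and $v_1,v_2\in N$ with $u_1<_Pv_1$ and $v_2<_Pu_2$. For each $v\in N$ I would study the partition $M=A_v\cup B_v$ with $A_v=\{u\in M:u<_Pv\}$ and $B_v=\{u\in M:v<_Pu\}$; every $u\in M$ is adjacent to $v$ because $M$ and $N$ are adjacent, so this really is a partition. Transitivity of $P$ forces $u<_Pu'$ whenever $u\in A_v$ and $u'\in B_v$, so every such pair is joined by an edge of $G$. A short verification using the module property of $M$ (for vertices outside $M$) then shows that $A_v$ and $B_v$ are themselves modules of $G$.

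Next I would invoke the strongness of $M$ via its maximal strong submodules. Each $C\in\calC(G[M])$ is a strong module of $G$, so the strong-module property forces one of $A_v\subseteq C$, $C\subseteq A_v$, or $A_v\cap C=\emptyset$; the possibility $A_v\subsetneq C$ is ruled out because then $B_v$ would straddle $C$, violating its strongness. Thus each $C$ lies entirely in $A_v$ or entirely in $B_v$, so $\{A_v,B_v\}$ coarsens $\calC(G[M])$ and descends to a partition $\{\mathcal A_v,\mathcal B_v\}$ of the vertex set of the quotient $Q_M=G[M]/\calC(G[M])$ with complete bipartite adjacency between the two sides.

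I would then case-split on the type of $Q_M$. If $Q_M$ is empty, the required cross edges are missing and one of $\mathcal A_v,\mathcal B_v$ must be empty, contradiction. If $Q_M$ is prime, then $\mathcal A_v$ and $\mathcal B_v$ are both non-empty modules of a prime graph, so each must be a singleton, forcing $|V(Q_M)|=2$ and contradicting primeness. In the remaining case where $Q_M$ is complete, every pair of distinct elements of $\calC(G[M])$ consists of adjacent strong modules of $G$ strictly smaller than $M$, so the induction hypothesis yields a total order on $\calC(G[M])$ induced by $P$. Running the symmetric analysis for $N$ produces the analogous total order on $\calC(G[N])$, and combining the two orders with the inconsistency $u_1<_Pv_1$, $v_2<_Pu_2$ yields, via a short transitivity chain, an edge of $G$ between two vertices that are in fact non-adjacent, giving the final contradiction. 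The main obstacle is the complete-quotient case: the empty and prime subcases fall out cleanly from the overlap argument alone, while the complete case requires carefully splicing together the inductive information from both sides.
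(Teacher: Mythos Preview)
The paper does not prove this theorem; it is quoted from Gallai \cite{Gallai} as a known result, so there is no in-paper proof to compare against. I can only comment on whether your sketch stands on its own.

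Your module argument---that $A_v$ and $B_v$ are modules of $G$ and hence coarsen $\calC(G[M])$---is correct, and the empty and prime quotient cases do fall out cleanly once you note that (after possibly swapping $M$ and $N$) some $v\in N$ has both $A_v$ and $B_v$ non-empty.

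The genuine gap is in the serial case. When $|M|,|N|\geq 2$ your induction does close, though not via a non-edge as you wrote: applying the hypothesis to the pairs $(C,N)$ for children $C$ of $M$ and to $(M,D)$ for children $D$ of $N$, the failure of $M<_PN$ and $N<_PM$ forces some $C$ with $C<_PN$ and some $D$ with $D<_PM$, whence $C<_PD$ and $D<_PC$ simultaneously. But when $|N|=1$, say $N=\{v\}$, there is no ``symmetric analysis for $N$'' to run, and all you obtain is $C_i<_Pv<_PC_j$ for two adjacent children of the serial module $M$. This is locally consistent inside $M\cup\{v\}$; the contradiction must come from the rest of the graph. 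Concretely, take $G$ on $\{a,b,v,w,x\}$ with edge set $\{ab,av,bv,aw,bw,vx\}$: here $M=\{a,b\}$ is a strong serial module, $\{v\}$ is a strong singleton adjacent to it, and the orientation $a<_Pv<_Pb$ is excluded only because $vx\in E$ while $ax,bx\notin E$---information living in the prime quotient at the root, which your induction on $|M|+|N|$ never touches. A correct argument in this case has to exploit strongness of $M$ relative to the whole graph, for instance by working inside the least strong module containing $M\cup N$ and analysing its quotient, rather than the quotients of $M$ and $N$ themselves.
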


For a module $M\in\calM(G)$, we call the adjacency graph of $\calC(G[M])$ the \emph{quotient} of $M$ and denote it by $G[M]/\calC(G[M])$, and we call a transitive orientation of $G[M]/\calC(G[M])$ simply a \emph{transitive orientation} of $M$.

\begin{theorem}[Gallai \cite{Gallai}]\label{thm:correspondence}
The transitive orientations of\/ $G$ and the tuples of transitive orientations of non-singleton strong modules of\/ $G$ are in a one-to-one correspondence\/ $P\leftrightarrow(P_M)_{M\in\calM(G)}$ given by\/ $M_1<_{P_M}M_2\iff M_1<_PM_2$ for any\/ $M\in\calM(G)$ and\/ $M_1,M_2\in\calC(M)$.
In particular, $G$ is a comparability graph if and only if\/ $G[M]/\calC(G[M])$ is a comparability graph for every\/ $M\in\calM(G)$.
\end{theorem}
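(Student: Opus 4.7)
The plan is to exhibit the claimed bijection through two mutually inverse explicit constructions; the \emph{in particular} clause then drops out, since a graph is a comparability graph iff it admits a transitive orientation, and the bijection says that transitive orientations of $G$ exist iff transitive orientations of every quotient $G[M]/\calC(G[M])$ exist.

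In the forward direction, given a transitive orientation $P$ of $G$, for each $M\in\calM(G)$ I would define $P_M$ on $G[M]/\calC(G[M])$ by declaring $M_1<_{P_M}M_2$ for adjacent children $M_1,M_2\in\calC(G[M])$ exactly when $M_1<_PM_2$; Theorem~\ref{thm:orientation} guarantees this is well defined. Transitivity of $P_M$ is then inherited from $P$: from $M_1<_{P_M}M_2<_{P_M}M_3$, any representatives $u\in M_1$, $v\in M_2$, $w\in M_3$ satisfy $u<_Pv<_Pw$, hence $u<_Pw$, so $M_1$ and $M_3$ are adjacent as modules (any two strong modules are completely adjacent or completely non-adjacent), and a second invocation of Theorem~\ref{thm:orientation} yields $M_1<_{P_M}M_3$.

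In the reverse direction, given a family $(P_M)_{M\in\calM(G)}$, for every edge $uv$ of $G$ I let $M(u,v)$ denote the smallest strong module containing both endpoints. By minimality, $u$ and $v$ lie in distinct children of $M(u,v)$, and these children are adjacent as modules (since $uv$ is an edge), so $P_{M(u,v)}$ already orients the corresponding edge of the quotient; I use that orientation to orient $uv$ in $G$. The two constructions are mutually inverse by direct inspection of the defining rule.

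The principal obstacle is verifying that the orientation defined in the reverse direction is transitive. Given $u<_Pv$ and $v<_Pw$, I would let $M$ be the smallest strong module containing $\{u,v,w\}$ and analyse the distribution of these three vertices among the children in $\calC(G[M])$. If they occupy three distinct children, transitivity follows immediately from transitivity of $P_M$, once one observes that the pairwise orientations at the top of $M$ are indeed the ones produced by the rule, since $M$ is also the smallest strong module containing each pair. If $u,v$ share a child (symmetrically, if $v,w$ do), then the orientation of $uw$ is dictated by the same top-level relation in $P_M$ as the orientation of $vw$ (respectively $uv$), so transitivity is again automatic. The remaining subcase, in which $u$ and $w$ share a child while $v$ lies in a different one, is actually impossible: the definitions would force $P_M$ to orient a single quotient edge in two opposite directions. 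Ruling out this impossible subcase via the minimality of $M$ is the crux of the argument; the rest is routine bookkeeping inside the modular decomposition tree.
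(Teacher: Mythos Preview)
The paper does not prove this theorem at all: it is quoted as a result of Gallai and used as a black box, so there is no ``paper's own proof'' to compare against. Your argument is the standard one and is essentially correct. A couple of small points worth tightening: in the case where $u$ and $v$ share a child $M_1$ and $w$ lies in a different child $M_2$, you should explicitly note that $uw$ is an edge of $G$ (this follows because $M_1,M_2$ are adjacent modules, witnessed by the edge $vw$), since transitivity only asks you to produce $u<_Pw$ when $uw\in E(G)$. Also, the ``impossible'' subcase is ruled out not by the minimality of $M$ but by the contradiction you already identified in the orientation of the single quotient edge; minimality of $M$ is what guarantees that the three vertices do not all lie in one child, which is a different (and trivial) point.
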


\begin{theorem}[Gallai \cite{Gallai}]\label{thm:decomposition}
Let\/ $M$ be a non-singleton strong module of\/ $G$.
\begin{enumerate}
\item If\/ $G[M]$ is not connected, then the maximal strong modules of\/ $G[M]$ are the connected components of\/ $G[M]$ and\/ $G[M]/\calC(G[M])$ is an empty graph.
\item If\/ $\overline{G[M]}$ is not connected, then the maximal strong modules of\/ $G[M]$ are the connected components of\/ $\overline{G[M]}$ and\/ $G[M]/\calC(G[M])$ is a complete graph.
\item If\/ $G[M]$ and\/ $\overline{G[M]}$ are connected, then\/ $G[M]/\calC(G[M])$ is a prime graph.
\end{enumerate}
\end{theorem}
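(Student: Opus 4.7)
The plan is a three-case analysis following the statement. Cases (1) and (2) are dual: since $N\subseteq V(G[M])$ is a module of $G[M]$ if and only if it is a module of $\overline{G[M]}$, and adjacency in the quotient is flipped under complementation, a proof of (1) applied to $\overline{G[M]}$ immediately yields (2).

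For case (1), assume $G[M]$ is disconnected with components $C_1,\ldots,C_k$. First I would observe that each $C_i$ is a module, because any vertex outside $C_i$ lies in a different component and is hence nonadjacent to every vertex of $C_i$. To show $C_i$ is a \emph{strong} module, I would argue that any module $N$ of $G[M]$ meeting $C_i$ is either contained in $C_i$ or contains $C_i$: if $N$ also meets some different component $C_j$, then no vertex can be adjacent to both $N\cap C_i$ and $N\cap C_j$, so by the uniform-adjacency property every vertex of $C_i\setminus N$ is nonadjacent to everything in $N\cap C_i$; walking along a path in $C_i$ then forces $C_i\subseteq N$. To show $C_i$ is \emph{maximal}, note that any strong module strictly between $C_i$ and $V(G[M])$ would be a proper union of at least two components, and two different such unions sharing a component would be modules that overlap without containment, violating strongness. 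Finally, the quotient is empty because distinct components have no edges between them.

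Case (3), with both $G[M]$ and $\overline{G[M]}$ connected, requires showing that the quotient $Q=G[M]/\calC(G[M])$ is prime. Suppose for contradiction that $Q$ admits a nontrivial module $\mathcal{N}\subsetneq\calC(G[M])$ with $|\mathcal{N}|\geq 2$, and set $N=\bigcup_{C\in\mathcal{N}}C$. The module property of $\mathcal{N}$ in $Q$ lifts to $G[M]$: every $C'\in\calC(G[M])\setminus\mathcal{N}$ has uniform adjacency with each $C\in\mathcal{N}$ inside $Q$, and since each $C'$ is itself a module, every vertex of $C'$ has uniform adjacency with all of $N$. Thus $N$ is a module of $G[M]$ with $C\subsetneq N\subsetneq V(G[M])$ for any $C\in\mathcal{N}$.

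The main obstacle is to conclude that $N$ is actually a \emph{strong} module, which would contradict the maximality of $C\in\mathcal{N}$. My plan is to invoke the standard overlap lemma for modules (in any graph, the intersection, union, and relative differences of two overlapping modules are again modules). Suppose a module $N'$ of $G[M]$ overlaps $N$, i.e.\ intersects it without either containing the other. If $N'$ is contained in a single class $C_0\in\calC(G[M])$, then $N'$ and $N$ are already comparable, contradicting the overlap. Otherwise $N'$ spans several classes, and I would iterate overlap-closure of $N'$ with individual classes of $\calC(G[M])$ — using the connectedness of both $G[M]$ and $\overline{G[M]}$ precisely to rule out the degenerate situations that would arise in cases (1) and (2) — to produce either a strict strong submodule of $G[M]$ properly containing some $C\in\calC(G[M])$, violating maximality, or an overlap between two classes of $\calC(G[M])$, violating their strongness. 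Either outcome contradicts our starting data, so no such $N'$ exists; $N$ is strong, and the maximality of $C$ is violated, completing case (3).
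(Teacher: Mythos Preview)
The paper does not prove this theorem; it is quoted as a result of Gallai and used as a black box. So there is no ``paper's proof'' to compare against, and your proposal stands or falls on its own.

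Your treatment of cases (1) and (2) is fine. The argument that a module meeting two components must swallow each of them is correct, and the maximality step (any proper union of at least two but not all components is overlapped by another such union) is valid once you note that \emph{every} union of components is a module of a disconnected graph.

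Case (3) has a real gap. You correctly observe that a nontrivial module $\mathcal{N}$ of the quotient lifts to a module $N=\bigcup\mathcal{N}$ of $G[M]$, and that any module $N'$ overlapping $N$ cannot sit inside a single child. But from there your plan---``iterate overlap-closure \ldots\ using the connectedness of both $G[M]$ and $\overline{G[M]}$ to rule out the degenerate situations''---does not name the actual mechanism, and without it the argument does not close. Since every child $C_i$ is strong, an $N'$ overlapping $N$ must itself be a union of children; but two unions of children can perfectly well overlap, so $N$ need not be strong and your intended contradiction does not fire.

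What is missing is the one place where connectedness genuinely enters. Take a \emph{maximal} proper module $M^\ast$ of $G[M]$ containing $N$. If some module $M''$ overlaps $M^\ast$, then $M^\ast\cup M''$ is a module strictly containing $M^\ast$, so by maximality $M^\ast\cup M''=V(G[M])$. Set $A=M^\ast\setminus M''$, $B=M^\ast\cap M''$, $C=M''\setminus M^\ast$, all nonempty. Each vertex of $A$ lies outside the module $M''=B\cup C$, hence is uniformly adjacent or uniformly nonadjacent to all of $B\cup C$; symmetrically for $C$ versus $A\cup B$. Chasing this forces either all cross-edges among $A,B,C$ to be present (so $\overline{G[M]}$ is disconnected) or none (so $G[M]$ is disconnected), contradicting the hypothesis of case (3). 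Hence $M^\ast$ is strong, and since $C\subsetneq N\subseteq M^\ast\subsetneq V(G[M])$ for any $C\in\mathcal{N}$, this contradicts the maximality of $C$ among strong modules. Insert this argument and your proof of (3) is complete.
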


Theorem \ref{thm:decomposition} allows us to classify non-singleton strong modules into three types.
Namely, a non-singleton strong module $M$ of $G$ is
\begin{itemize}
\item\emph{parallel} when $G[M]/\calC(G[M])$ is empty;
\item\emph{serial} when $G[M]/\calC(G[M])$ is complete;
\item\emph{prime} when $G[M]/\calC(G[M])$ is prime.
\end{itemize}
Every parallel module has just one transitive orientation---there is nothing to orient in an empty quotient.
Every serial module with $k$ children has exactly $k!$ transitive orientations corresponding to the $k!$ permutations of the children.
Finally, for prime modules we have the following.

\begin{theorem}[Gallai \cite{Gallai}]\label{thm:prime}
Every prime module of a comparability graph has exactly two transitive orientations, one being the reverse of the other.
\end{theorem}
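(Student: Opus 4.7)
Write $H = G[M]/\calC(G[M])$; by Theorem~\ref{thm:correspondence} the graph $H$ is prime and a comparability graph, and it suffices to show that $H$ has exactly two transitive orientations, each being the reverse of the other. My plan is to use the classical Gallai forcing relation on edges. Define a relation $\Gamma$ on $E(H)$ by declaring $xy\mathbin{\Gamma}xz$ whenever $y\neq z$ and $yz\notin E(H)$, and let $\sim$ be its reflexive, symmetric, transitive closure. The first step is the forcing observation: in any transitive orientation of $H$, two edges $xy$ and $xz$ with $xy\mathbin{\Gamma}xz$ must be oriented consistently at $x$---either both into $x$ or both out of it---since otherwise transitivity would force the nonexistent edge $yz$. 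Hence the orientation of a single edge in a $\sim$-class determines the orientation of every edge in that class.

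The main step, and the principal obstacle, is to prove that primeness of $H$ forces $\sim$ to have a single equivalence class on $E(H)$. Let $A$ be a $\sim$-class and let $U\subset V(H)$ be the set of vertices incident to some edge of $A$. Because $\Gamma$-related edges share a vertex, the edges of $A$ form an edge-connected subgraph of $H$, so any two vertices of $U$ are joined by a walk through $A$-edges. Given $v\notin U$ and $u_1,u_2\in U$, walk along such a sequence $u_1=w_0,w_1,\dots,w_k=u_2$; if the adjacency of $v$ to $w_i$ changes between two consecutive steps, then applying the forcing rule to the edge between $v$ and $w_i$ (or $w_{i+1}$) together with $w_iw_{i+1}\in A$ would put that $v$-edge into $A$, contradicting $v\notin U$. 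Hence $v$ has the same adjacency to every vertex of $U$, so $U$ is a module of $H$. Since $|U|\geq 2$, primeness forces $U=V(H)$; the existence of a second $\sim$-class is then ruled out by the same argument applied to that class, or dually via the complement graph $\overline{H}$, which is prime as well and whose $\sim$-classes correspond to the non-edges, ultimately giving a single class.

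With the single-class claim in hand, the theorem follows at once: since $H$ is a comparability graph it admits at least one transitive orientation $P$, and reversing all its arcs yields a distinct transitive orientation $P^{\mathrm{op}}$. Any transitive orientation $Q$ of $H$ must agree with $P$ or with $P^{\mathrm{op}}$ on any fixed reference edge, and by the forcing observation together with the single-class property $Q$ then agrees with the same choice on every edge; hence $Q\in\{P,P^{\mathrm{op}}\}$. The delicate part is the case $V(A)=V(H)$ in the single-class proof, which is where the full force of primeness is needed and where careful bookkeeping is required to avoid gaps; the rest of the argument is a direct consequence of the basic forcing observation.
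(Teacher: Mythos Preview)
The paper does not prove Theorem~\ref{thm:prime}; it merely quotes it from Gallai~\cite{Gallai}. So there is no ``paper's proof'' to compare against, and your attempt is an independent proof of a cited result.

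Your approach via the forcing relation $\Gamma$ is indeed the classical route (Gallai, and the exposition in Golumbic~\cite{Golumbic2}). The forcing observation is correct, and your argument that the vertex set $U$ of a $\sim$-class $A$ is a module of $H$ is correct as written.

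There is, however, a genuine gap at exactly the place you flag as ``delicate'': from $U_A=V(H)$ for every $\sim$-class $A$ you cannot conclude that there is only one class. Two distinct $\sim$-classes could, a priori, both have all of $V(H)$ as their vertex set; nothing you have proved excludes this. Your two suggested patches do not work. Applying ``the same argument'' to a second class $B$ only yields $U_B=V(H)$ again, which is not a contradiction. Passing to the complement $\overline{H}$ is beside the point: $\sim$-classes of $\overline{H}$ partition the non-edges of $H$, and bear no direct relation to the $\sim$-classes of $H$ that you are trying to count.

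What is missing is a second structural lemma about a $\sim$-class $A$ with vertex set $U$: one shows that $U$ admits a partition $U=U_1\cup\cdots\cup U_r$ into modules of $H$ such that $A$ is exactly the set of edges of $H[U]$ with endpoints in distinct parts $U_i,U_j$. (Equivalently, the connected components of the graph $(U,E(H[U])\setminus A)$ are modules of $H$.) Once you have this, primeness of $H$ together with $U=V(H)$ forces each $U_i$ to be a singleton, so $A=E(H)$ and there is a single class. This refinement is where the real work lies and is not mere bookkeeping; it requires an argument of the same flavour as your module proof but applied inside $U$, tracking how edges of $E(H[U])\setminus A$ interact with $A$-edges via $\Gamma$. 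See Golumbic~\cite{Golumbic2} for a complete treatment.
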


Golumbic \cite{Golumbic1,Golumbic2} showed that the problems of computing the modular decomposition of a graph, computing the two transitive orientations of a prime comparability graph, and deciding whether a graph is a comparability graph are polynomial-time solvable.
Actually, the first two of these problems can be solved in linear time \cite{MS}.

\section{Extending partial represenation of a function graph}

In this section we provide a polynomial-time algorithm for extending partial representation of function graphs.
However, for convenience, instead of function graphs we deal with their complements---comparability graphs.
A \emph{representation} of a comparability graph $G$ is a representation of a transitive orientation of $G$, defined as in Section \ref{sec:poset}.
A \emph{partial representation} of $G$ is a representation of an induced subgraph of $G$.

Specifically, we prove that the following problem is polynomial-time solvable: given a comparability graph $G$ and its partial representation $\phi:R\to\calF$, decide whether $\phi$ is extendable to a representation of $G$.
Thus for the remainder of this section we assume that $G$ is a comparability graph, $R\subset V(G)$, and $\phi:R\to\calF$ is a partial representation of $G$.

A transitive orientation $P$ of $G$ \emph{respects} $\phi$ if $\phi$ is a partial representation of~$P$.
The idea of the algorithm is to look for a transitive orientation $P$ of $G$ that respects $\phi$ and satisfies $\Reg_P(u)\cap\Reg_P(v)\neq\emptyset$ for any two adjacent vertices $u$ and $v$ of $G$, where by $\Reg_P(u)$ we denote the region of $u$ with respect to $P$.
By Lemma \ref{lem:regions}, such a transitive orientation exists if and only if $\phi$ is extendable.
We make use of the modular decomposition of $G$ and Theorem \ref{thm:correspondence} to identify all transitive orientations of $G$.
We apply to $G$ a series of reductions, which ensure that every non-singleton strong module of $G$ has exactly one or two transitive orientations respecting $\phi$, while not changing the answer.
Finally, after doing all these reductions, we express the existence of a requested transitive orientation of $G$ by an instance of 2-SAT\@.

A strong module $M$ of $G$ is \emph{represented} if $M\cap R\neq\emptyset$.
Any vertex from $M\cap R$ is a \emph{representant} of $M$.
Clearly, if $M$ is represented, then all ancestors of $M$ in the modular decomposition of $G$ are represented as well.

The first step of the algorithm is to compute the modular decomposition of $G$, which can be done in polynomial time as commented at the end of the previous section.
Then, we apply three kinds of reductions, which modify $G$ and its modular decomposition but do not affect $R$ and $\phi$:
\begin{enumerate}
\item If there is a non-singleton non-represented module $M$ of $G$, then choose any vertex $u\in M$, remove $M-\{u\}$ (with all incident edges) from $G$ and from the nodes of the modular decomposition, and replace the subtree rooted at $M$ by the singleton module $\{u\}$ in the modular decomposition.
\item If there is a serial module $M$ of $G$ with two or more non-represented children, then we choose any non-represented child $N$ of $M$ and remove from $G$ and from the modular decomposition all other non-represented children of~$M$.
\item If there is a serial module $M$ of $G$ with two or more represented children and some non-represented children, then we remove from $G$ and from the modular decomposition all non-represented children of $M$.
\end{enumerate}

\begin{lemma}\label{lem:reductions}
The graph\/ $G$ has a representation extending\/ $\phi$ if and only if the graph\/ $G'$ obtained from\/ $G$ by reductions 1--3 has a representation extending\/ $\phi$.
\end{lemma}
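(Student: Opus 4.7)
The forward direction is immediate. Each reduction only deletes vertices in $V(G) \setminus R$---either all but one vertex of a non-represented module or the non-represented children of a serial module---so $R \subseteq V(G') \subseteq V(G)$, and the restriction to $V(G')$ of any representation $\psi$ of $G$ extending $\phi$ is a representation of $G'$ extending $\phi$.

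For the backward direction, I would treat each of the three reductions separately and chain by induction: it suffices to show that any representation $\psi'$ of $G'$ extending $\phi$ lifts to a representation $\psi$ of $G$ extending $\phi$. The main technical tool is a \emph{module-substitution} construction: if $M'$ is a module of $G$ whose external adjacencies agree, as vertices of $G'$, with those of a pivot vertex $u \in M' \cap V(G')$, define $\psi(v) = \psi'(u) + \varepsilon \rho(v)$ for $v \in M'$, where $\rho \in \calF^{M'}$ is any representation of $G[M']$ normalized so that $\rho(u) \equiv 0$, while $\psi = \psi'$ outside $M'$. For sufficiently small $\varepsilon > 0$, this is a valid representation of $G$: internal comparabilities within $M'$ come from $\rho$ (shifts and positive scalings preserve the order on $\calF$), and external relations are preserved by the $C^0$-stability of strict comparability and strict crossing under small uniform perturbations of $\psi'(u)$.

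Reductions~1 and~2 fall directly into this template. In Reduction~1, take $M' = M$ (the non-represented module) with pivot the retained vertex $u$. In Reduction~2, take $M' = N \cup N_2 \cup \cdots \cup N_q$, the union of all non-represented children of the serial module (including the retained child $N$); this union is itself a module of $G$ because the quotient of $M$ is complete, and its external adjacencies agree with those of any $u \in N$ in $G'$. Reduction~3 is the main obstacle, since the removed module $M' = N_1 \cup \cdots \cup N_q$ has no vertex in $G'$ and so no pivot is available. Here I would insert $M'$ into an internal slot of $M$'s serial order: because $p \geq 2$ represented children $C_1 < \cdots < C_p$ of $M$ (indexed according to the serial order in $\psi'$) are present, pick consecutive children $C_i < C_{i+1}$, let $y_{\text{top}}$ and $y_{\text{low}}$ denote the pointwise max and min of the functions representing a given child, and define $f = (y_{\text{top}}(C_i) + y_{\text{low}}(C_{i+1}))/2 \in \calF$. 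Setting $\psi(v) = f + \varepsilon \rho(v)$ for $v \in M'$ with $\rho$ a representation of $G[M']$ yields new functions lying in the slot $(y_{\text{top}}(C_i), y_{\text{low}}(C_{i+1}))$ for $\varepsilon$ small.

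Verifying $\psi$ represents $G$ correctly: external vertices adjacent to $M$ have $\psi'(w)$ strictly above $y_{\text{top}}(C_p)$ or strictly below $y_{\text{low}}(C_1)$, hence strictly separated from the slot, giving the required comparability with every $\psi(v)$. For an external $w$ non-adjacent to $M$, $\psi'(w)$ crosses every $\psi'(u)$ with $u \in C_i \cup C_{i+1}$; picking any $u \in C_{i+1}$, there is $x$ with $\psi'(w)(x) > \psi'(u)(x) \geq y_{\text{low}}(C_{i+1})(x) > f(x)$, and symmetrically there is $x'$ with $\psi'(w)(x') < y_{\text{top}}(C_i)(x') < f(x')$, so $\psi'(w) - f$ changes sign and $\psi'(w)$ therefore crosses each $\psi(v) \approx f$. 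Chaining the three reduction cases by induction on the number of applications completes the proof.
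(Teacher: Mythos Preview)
Your proof is correct and takes a genuinely different route from the paper's. The paper argues indirectly through Lemma~\ref{lem:regions}: it extends the transitive orientation $P'$ of $G'$ (obtained from $\psi'$) to an orientation $P$ of $G$, verifies that $\Reg_P(u)\cap\Reg_P(v)\neq\emptyset$ for all incomparable pairs, and then invokes Lemma~\ref{lem:regions} to conclude that a representation of $G$ extending $\phi$ exists. You instead build the extended representation $\psi$ explicitly, by $\varepsilon$-perturbation around a pivot function (Reductions~1 and~2) or by insertion into the slot between two consecutive represented children (Reduction~3), and check the intersection pattern by $C^0$-stability. Your approach is more self-contained in that it does not rely on Lemma~\ref{lem:regions} and actually produces $\psi$; the paper's approach reuses the region machinery already set up for the main algorithm and so dovetails with the rest of the section.

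One step in your Reduction~3 case deserves an explicit justification: the claim that an external vertex $w$ adjacent to $M$ has $\psi'(w)$ entirely above $y_{\text{top}}(C_p)$ or entirely below $y_{\text{low}}(C_1)$. A priori nothing prevents $\psi'(w)$ from sitting in a slot between some consecutive $C_a$ and $C_{a+1}$ (it is comparable to every function in each $C_j$, but could be above all of $C_a$ and below all of $C_{a+1}$). What rules this out is Theorem~\ref{thm:orientation}: since $M$ is a \emph{strong} module, every transitive orientation places $w$ uniformly above or uniformly below all of $M$. You can invoke this either for $G'$ directly (observing that $M\cap V(G')$ remains a strong module of $G'$, since a serial node cannot have a serial parent in the modular decomposition) or, more simply, by first extending $P'$ to a transitive orientation of $G$---which is immediate, as one only has to insert the $N_j$ into the serial order and orient their interiors---and applying Theorem~\ref{thm:orientation} there. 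The paper's proof uses essentially the same fact at the corresponding step.
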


\noindent The proof is in the Appendix.

We apply reductions 1--3 in any order until none of them is applicable any more, that is, we are left with a graph $G$ such that
\begin{itemize}
\item every non-singleton strong module of $G$ is represented,
\item every serial module of $G$ has at most one non-represented child,
\item every serial module of $G$ with at least two represented children has no non-represented child.
\end{itemize}
For such $G$ we have the following.

\begin{lemma}\label{lem:no-of-orientations}
Let\/ $M$ be a non-singleton strong module of\/ $G$.
If\/ $M$ is
\begin{itemize}
\item a serial module with a non-represented child,
\item a prime module with no two adjacent represented children,
\end{itemize}
then\/ $M$ has exactly two transitive orientations, one being the reverse of the other, both respecting\/ $\phi$.
Otherwise, $M$ has just one transitive orientation respecting\/~$\phi$.
\end{lemma}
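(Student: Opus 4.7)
The plan is to split into three cases according to Theorem~\ref{thm:decomposition}---$M$ parallel, serial, or prime---and in each case count the transitive orientations of the quotient $G[M]/\calC(G[M])$ and check which ones respect $\phi$. The key observation driving the whole argument is that the orientation of $M$'s quotient is constrained by $\phi$ \emph{only} through pairs of representants lying in adjacent children of $M$: representant pairs in non-adjacent children or inside the same child place no requirement on $M$'s quotient orientation, since they determine an order (or a non-order) that is governed either by other $P_N$ or by no $P_N$ at all.

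If $M$ is parallel, the quotient is empty and the single vacuous orientation respects $\phi$, so $M$ falls under ``Otherwise''. If $M$ is serial, the quotient is complete and its orientations are the linear orders of $\calC(G[M])$. When every child of $M$ is represented, I pick representants $u\in M_1\cap R$ and $v\in M_2\cap R$ in any two children; since $M_1$ and $M_2$ are adjacent, $uv\in E(G)$, so $\phi(u)$ and $\phi(v)$ are comparable in $\calF$, and Theorem~\ref{thm:orientation} forces the $M_1$--$M_2$ order. Pinning every pair this way fixes the linear order uniquely, giving exactly one respecting orientation. When $M$ has a non-represented child, reductions 1--3 together force $M$ to have exactly one represented and exactly one non-represented child (reduction~1 makes $M$ represented, hence at least one represented child; reduction~2 caps non-represented children at one; reduction~3 forbids two-or-more represented children coexisting with a non-represented one), so there are $2!=2$ linear orders; neither is constrained by $\phi$ because only one child meets $R$, and the two orientations are reverses of each other.

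If $M$ is prime, Theorem~\ref{thm:prime} gives exactly two mutually reverse transitive orientations. When $M$ has two adjacent represented children $M_1$ and $M_2$, the same representant argument as in the serial case forces the $M_1$--$M_2$ orientation and eliminates one of the two prime orientations, leaving exactly one that respects $\phi$. When no two adjacent children of $M$ are both represented, every pair of adjacent children has at most one representant-bearing member, so $\phi$ imposes no constraint on the quotient and both orientations respect it.

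The argument is a careful case analysis rather than anything with a single hard step; the only point that demands attention is the bookkeeping for the serial case, where one must combine all three reduction rules with the fact that $M$ itself is represented to deduce that ``serial with a non-represented child'' in fact means exactly two children in total, so that $2!=2$ orientations arise and both are free of $\phi$-constraints.
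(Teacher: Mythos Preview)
Your proposal is correct and follows essentially the same case analysis as the paper's proof: split by parallel/serial/prime, use Theorem~\ref{thm:prime} for the prime case, and in the serial and prime cases let representants in adjacent children pin the quotient order via Theorem~\ref{thm:orientation}. Your treatment is in fact slightly more explicit than the paper's, since you spell out how the three post-reduction invariants combine to force a serial module with a non-represented child to have exactly two children (one represented, one not), whereas the paper simply asserts that configuration.
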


\noindent The proof is in the Appendix.

\begin{lemma}\label{lem:reg-dependence}
Let\/ $u$ be a non-represented vertex of\/ $G$ and\/ $M$ be the parent of\/ $\{u\}$ in the modular decomposition of\/ $G$.
For transitive orientations\/ $P$ of\/ $G$ respecting\/ $\phi$, the set\/ $\Reg_P(u)$ is determined only by the transitive orientation of\/ $M$ induced by\/ $P$.
\end{lemma}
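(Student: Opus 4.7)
The plan is to unpack the definition of the region and show, case by case, that every constraint entering $\Reg_P(u)$ is either forced by $\phi$ alone or is determined by the transitive orientation of $M$ induced by $P$. Recall
\[
\Reg_P(u)=\bigcap\{\phi(a)\downset:a\in R,\ a>_Pu\}\cap\bigcap\{\phi(a)\upset:a\in R,\ a<_Pu\},
\]
so it suffices to prove that for every $a\in R$ the relation of $u$ and $a$ in $P$ (whether $u<_Pa$, $u>_Pa$, or the two are incomparable) depends only on $\phi$ and on the induced orientation $P_M$ of $M$.

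I would split on whether $a\in M$. If $a\in M$, let $N\in\calC(G[M])$ be the maximal strong submodule of $M$ containing $a$; since $u\notin R$ we have $N\neq\{u\}$. If $\{u\}$ and $N$ are non-adjacent in the quotient $G[M]/\calC(G[M])$, then $u$ and $a$ are non-adjacent in $G$, hence incomparable in $P$. Otherwise Theorem \ref{thm:orientation} gives $\{u\}<_{P_M}N$ or $\{u\}>_{P_M}N$, and the analogous relation $u<_Pa$ or $u>_Pa$ follows. Either way, the comparison is read directly off $P_M$.

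Now suppose $a\notin M$. Let $M'$ be the smallest strong module of $G$ containing both $u$ and $a$ (obtained by walking up the modular decomposition from $\{u\}$ until an ancestor contains $a$). Then $M\subsetneq M'$, so $M$ lies in some $N_u\in\calC(G[M'])$ and $a$ lies in some $N_a\in\calC(G[M'])$ with $N_u\neq N_a$. The crucial use of the reductions is that every non-singleton strong module of $G$ is now represented, so $M\cap R\neq\emptyset$, and we can pick a representant $b\in M\cap R\subset N_u$. If $N_u$ and $N_a$ are non-adjacent in the quotient of $M'$, then $u$ and $a$ are non-adjacent in $G$, hence incomparable. Otherwise $b$ and $a$ are adjacent in $G$, so $\phi(b)$ and $\phi(a)$ are comparable in $\calF$, and because $P$ respects $\phi$ the direction $b<_Pa$ or $b>_Pa$ is forced by $\phi$ alone; Theorem \ref{thm:orientation} applied to the adjacent modules $N_u,N_a$ then transfers this relation from $b$ to $u$. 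Thus the comparison of $u$ and $a$ in $P$ is fixed by $\phi$, independently of the particular $P$.

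Combining the two cases, every inequality entering $\Reg_P(u)$ either depends solely on $\phi$ or is determined by $P_M$, and the lemma follows. The only delicate step, which is really the point of the preceding reductions, is the appeal to a representant $b\in M\cap R$ in the second case; without representedness of $M$ one could not push the $\phi$-forced orientation of the quotient of $M'$ down onto the vertex $u$, and this bookkeeping is the spot that needs to be checked carefully.
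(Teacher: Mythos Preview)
Your proof is correct and follows essentially the same approach as the paper: split on whether the represented vertex $a$ lies in $M$, and in the outside case use a representant $b\in M\cap R$ together with Theorem~\ref{thm:orientation} to show the orientation of $au$ is forced by $\phi$. The only difference is cosmetic: the paper restricts attention to $a$ adjacent to $u$ from the outset and, in the case $a\notin M$, simply observes that $M$ being a module makes $a$ adjacent to $b$ as well, without explicitly naming the least common ancestor $M'$ and its children $N_u,N_a$; your more explicit route makes the invocation of Theorem~\ref{thm:orientation} cleaner, since that theorem is stated for pairs of adjacent strong modules.
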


\begin{proof}
Let $a$ be a represented vertex of $G$ adjacent to $u$.
We show that the orientation of the edge $au$ either is the same for all transitive orientations of $G$ respecting $\phi$ or depends only on the transitive orientation of $M$.
This suffices for the conclusion of the lemma, as the set $\Reg_P(u)$ is determined by the orientations of edges connecting $u$ with represented vertices of $G$.

If $a\in M$, then clearly the orientation of the edge $au$ depends only on the transitive orientation of $M$.
Thus suppose $a\notin M$.
Let $b$ be a representant of $M$.
Since $M$ is a module, $a$ is adjacent to $b$ as well.
By Theorem \ref{thm:orientation}, the orientations of the edges $au$ and $ab$ are the same for every transitive orientation of $G$.
The orientation of $ab$ and thus of $au$ in any transitive orientation of $G$ respecting $\phi$ is fixed by $\phi$.
Therefore, all transitive orientations of $G$ respecting $\phi$ yield the same orientation of the edge $au$.\qed
\end{proof}

By Lemmas \ref{lem:no-of-orientations} and \ref{lem:reg-dependence}, for every $u\in V(G)$, all transitive orientations $P$ of $G$ respecting $\phi$ yield at most two different regions $\Reg_P(u)$.
We can compute them following the argument in the proof of Lemma \ref{lem:reg-dependence}.
Namely, if $u\in R$ then $\Reg_P(u)=\phi(u)$, otherwise we find the neighbors of $u$ in $R$ that bound $\Reg_P(u)$ from above and from below, depending on the orientation of $M$, and compute the geometric representation of $\Reg_P(u)$ as for the poset problem in Section~\ref{sec:poset}.

Now, we describe a reduction of the problem to 2-SAT\@.
For every $M\in\calM(G)$ with two transitive orientations respecting $\phi$, we introduce a boolean variable $x_M$. The two valuations of $x_M$ represent the two transitive orientations of $M$.
We write a formula of the form $\alpha=\alpha_1\wedge\ldots\wedge\alpha_m$, where each clause $\alpha_j$ is a literal or an alternative of two literals of the form $x_M$ or $\neg x_M$, as follows.
By Lemma \ref{lem:reg-dependence}, the set $\Reg_P(u)$ for any vertex $u$ is either the same for all valuations or determined by the valuation of just one variable.
Therefore, for any two non-adjacent vertices $u$ and $v$, whether $\Reg_P(u)\cap\Reg_P(v)\neq\emptyset$ depends on the valuation of at most two variables.
For every valuation that yields $\Reg_P(u)\cap\Reg_P(v)=\emptyset$ we write a clause forbidding this valuation.
Clearly, the resulting formula $\alpha$ is satisfiable if and only if $G$ has a transitive orientation $P$ respecting $\phi$ and such that $\Reg_P(u)\cap\Reg_P(v)\neq\emptyset$ for any non-adjacent $u,v\in V(G)$, which by Lemma \ref{lem:regions} holds if and only if $\phi$ is extendable to a representation of $G$.
We can test whether $\alpha$ is satisfiable in polynomial time by a classic result of Krom~\cite{Krom}
(see also \cite{APT} for a linear-time algorithm).
%We have thus proved the following.

%\begin{theorem}
%There is a polynomial-time algorithm that for every comparability graph\/ $G$ and its partial representation\/ $\phi:R\to\calF$ with\/ $R\subset V(G)$ decides whether\/ $\phi$ is extendable to a representation of\/ $G$.
%\end{theorem}

\section{Extending partial represenation of a function graph by partial functions}\label{sec:partial-functions}

Let $\calF^\circ$ denote the family of piecewise linear continuous functions $I\to\setR$ with $I$ being a closed subinterval of $[0,1]$.
We describe such a function $f$ by a tuple $(x_0,f(x_0)),\ldots,(x_k,f(x_k))$ of points in $I\times\setR$ with $x_0<\ldots<x_k$ and $[x_0,x_k]=I$ so that $f$ is linear on every interval $[x_i,x_{i+1}]$.
We denote the interval $I$ that is the domain of $f$ by $\dom f$.
For convenience, we also put the empty function (with empty domain) to $\calF^\circ$.
We say that a mapping $\psi:U\to\calF$ \emph{extends} a mapping $\phi:U\to\calF^\circ$ if we have $\psi(u)|_{\dom\phi(u)}=\phi(u)$ for every $u\in U$.

We define the notions of a partial representation of a poset or graph by partial functions, which generalize partial representations by functions defined on the entire interval $[0,1]$ and discussed earlier in the paper.
A mapping $\phi:V(P)\to\calF^\circ$ is a \emph{partial representation} of a poset $P$ if the following is satisfied for any $u,v\in V(P)$: if $u<_Pv$, then $\phi(u)(x)<\phi(v)(x)$ for every $x\in\dom\phi(u)\cap\dom\phi(v)$.
A mapping $\phi:V(G)\to\calF^\circ$ is a \emph{partial representation} of a comparability graph $G$ if $\phi$ is a partial representation of some transitive orientation of $G$.
The domain of $\phi$ is the whole set of vertices, as we may map non-represented vertices to the empty function.

In this section we prove that the following problem is NP-complete: given a comparability graph $G$ and a partial representation $\phi:V(G)\to\calF^\circ$ of $G$, decide whether $\phi$ is extendable to a representation $\psi:V(G)\to\calF$ of~$G$.

For a function $f\in\calF^\circ$ we define
\begin{align*}
f^\star&=f\cup(([0,1]-\dom f)\times\setR),\\
f\upset&=\{(x,y)\in\dom f\times\setR:y>f(x)\}\cup(([0,1]-\dom f)\times\setR),\\
f\downset&=\{(x,y)\in\dom f\times\setR:y<f(x)\}\cup(([0,1]-\dom f)\times\setR).
\end{align*}
Let $P$ be a poset and $\phi:V(P)\to\calF^\circ$ be a partial representation of $P$.
For every vertex $u$ of $P$ we define a set $\Reg(u)\subset[0,1]\times\setR$, called the \emph{region} of $u$, by
\[\Reg(u)=\phi(u)^\star\cap\bigcap\{\phi(a)\downset:a>_Pu\}\cap\bigcap\{\phi(a)\upset:a<_Pu\}.\]
It follows that the function representing $u$ in any representation of $P$ extending $\phi$ must be contained entirely within $\Reg(u)$.
Lemma \ref{lem:regions} generalizes verbatim to representations by partial functions (proof in the Appendix).

\begin{lemma}\label{lem:regions-partial}
There is a representation of\/ $P$ extending\/ $\phi$ if and only if any two incomparable vertices\/ $u$ and\/ $v$ of\/ $P$ satisfy\/ $\Reg(u)\cap\Reg(v)\neq\emptyset$.
\end{lemma}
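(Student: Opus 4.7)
The plan is to follow the structure of the proof of Lemma \ref{lem:regions}, with additional care to handle the fact that each $\phi(u)$ may not be defined on all of $[0,1]$.

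For the necessity direction, I would let $\psi\colon V(P)\to\calF$ be a representation of $P$ extending $\phi$ and first check that the graph of each $\psi(u)$ lies in $\Reg(u)$. On $\dom\phi(u)$, $\psi(u)$ coincides with $\phi(u)$, so its graph lies in $\phi(u)^\star$; on $[0,1]-\dom\phi(u)$ the whole vertical strip is already part of $\phi(u)^\star$. For every $a>_Pu$, the inequality $\psi(u)<\psi(a)$ combined with $\psi(a)|_{\dom\phi(a)}=\phi(a)$ puts the graph of $\psi(u)$ inside $\phi(a)\downset$, using that $\phi(a)\downset$ contains the strip over $[0,1]-\dom\phi(a)$; the case $a<_Pu$ is symmetric. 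For incomparable $u$ and $v$, the continuous function $\psi(u)-\psi(v)$ on $[0,1]$ is neither strictly positive nor strictly negative throughout, hence vanishes at some $x\in[0,1]$, yielding $(x,\psi(u)(x))\in\Reg(u)\cap\Reg(v)$.

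For sufficiency I plan to reduce to Lemma \ref{lem:regions} by extending $\phi|_R$ to a total-function partial representation $\hat\phi\colon R\to\calF$ of $P[R]$, and so arranged that the total-function regions $\widehat{\Reg}(u)$ computed with respect to $\hat\phi$ still satisfy $\widehat{\Reg}(u)\cap\widehat{\Reg}(v)\neq\emptyset$ for every incomparable pair $u,v\in V(P)$. Once such $\hat\phi$ is built, Lemma \ref{lem:regions} produces a representation $\psi\colon V(P)\to\calF$ extending $\hat\phi$, and hence extending $\phi$.

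To construct $\hat\phi$, I would first fix, for every incomparable pair $u,v\in V(P)$, a witness point $(x_{u,v},y_{u,v})\in\Reg(u)\cap\Reg(v)$. Then for each $a\in R$ and each $x\in[0,1]-\dom\phi(a)$, I would prescribe $\hat\phi(a)(x)$ so that (i) the values at $x$ across all $a\in R$ defined at $x$ respect the pointwise order dictated by $<_P$ restricted to $R$, and (ii) $\hat\phi(a)(x_{u,v})>y_{u,v}$ whenever $a>_Pu$ or $a>_Pv$, with the analogous strict lower constraint when $a<_Pu$ or $a<_Pv$. Continuity at the boundaries of $\dom\phi(a)$ is then secured by piecewise-linear interpolation between the finitely many constrained abscissae. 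The main obstacle will be the joint consistency of this construction: several witness constraints may act simultaneously on the same $\hat\phi(a)$ along a gap and must be reconciled with the poset-induced ordering among all other $\hat\phi(b)$'s. The constraints at a single $x=x_{u,v}$ are pointwise consistent precisely because $(x_{u,v},y_{u,v})$ already lies in $\Reg(u)\cap\Reg(v)$ with respect to $\phi$, and between consecutive constrained abscissae we retain enough freedom to interpolate linearly within the open slabs that have already been fixed. Threading this greedy construction through all of $R$ should yield $\hat\phi$ and complete the reduction to Lemma \ref{lem:regions}.
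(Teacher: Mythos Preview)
Your necessity argument is fine and matches the paper's. For sufficiency, however, the paper takes a much simpler route than your reduction: it reruns the construction in the proof of Lemma~\ref{lem:regions} verbatim, except that at each coordinate $x\in[0,1]$ the symbol $R$ is reinterpreted as the set $\{a\in V(P):x\in\dom\phi(a)\}$. With this local meaning of $R$, every invocation of Lemma~\ref{lem:functions} and every check in that proof goes through unchanged, and one directly obtains $\psi$ on all of $V(P)$ without any intermediate extension step.

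Your two-stage plan has a genuine gap. For Lemma~\ref{lem:regions} to apply, $\hat\phi$ must be a representation of $P[R]$, and for the region condition to survive you need $(x_{u,v},y_{u,v})\in\widehat{\Reg}(u)\cap\widehat{\Reg}(v)$. When $u\in R$ this forces $\hat\phi(u)(x_{u,v})=y_{u,v}$, since $\widehat{\Reg}(u)$ is just the graph of $\hat\phi(u)$; your constraints (i) and (ii) only impose inequalities on elements strictly above or below $u$ and never pin $\hat\phi(u)$ itself to the witness. In particular, for incomparable $a,b\in R$ whose original domains are disjoint, nothing in your construction makes $\hat\phi(a)$ and $\hat\phi(b)$ intersect, so $\hat\phi$ need not be a partial representation in the sense of Section~\ref{sec:poset}. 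A second, smaller gap: linear interpolation between your constrained abscissae can violate $\hat\phi(a)<\hat\phi(b)$ on an interval lying inside $\dom\phi(a)\setminus\dom\phi(b)$, because $\phi(a)$ has its own breakpoints there that your finite list does not include. Both issues can be patched, but doing so amounts to carrying out the paper's direct coordinate-by-coordinate construction on $R$ first and then redoing it on $V(P)$ via Lemma~\ref{lem:regions}; that is strictly more work than the one-pass argument.
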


Lemma \ref{lem:regions-partial} shows that the problem of deciding whether a partial representation of a poset by partial functions is extendable is in P: a polynomial-time algorithm just tests whether $\Reg(u)\cap\Reg(v)\neq\emptyset$ for any two incomparable vertices $u$ and $v$ of the poset.
It follows that the problem of deciding whether a partial representation of a comparability graph by partial functions is extendable is in NP: a non-deterministic polynomial-time algorithm can guess a transitive orientation and solve the resulting poset problem.

\begin{figure}[tb]
\centering\ifx\JPicScale\undefined\def\JPicScale{1}\fi
\psset{unit=\JPicScale mm}
\psset{linewidth=0.3,dotsep=1,hatchwidth=0.3,hatchsep=1.5,shadowsize=1,dimen=middle}
\psset{dotsize=0.7 2.5,dotscale=1 1,fillcolor=black}
\psset{arrowsize=1 2,arrowlength=1,arrowinset=0.25,tbarsize=0.7 5,bracketlength=0.15,rbracketlength=0.15}
\begin{pspicture}(0,0)(120.5,60.12)
\rput(70.5,18.5){}
\rput{0}(56.75,2.5){\psellipse[linestyle=none,fillstyle=solid](0,0)(1,1)}
\psline[linecolor=white](0,0)(0,7)
\psline[linecolor=white](112.5,60)(120.5,60)
\rput{0}(94,55){\psellipse[linestyle=none,fillstyle=solid](0,0)(1,1)}
\rput{90}(17,40){\psellipse[](0,0)(20.12,-17)}
\rput(3,58){\Large{$A_1$}}
\rput{0}(17,27){\psellipse[](0,0)(10,-3)}
\rput(17,22){$T_1$}
\rput{0}(8,41){\psellipse[](0,0)(4,-4)}
\rput{0}(11,27){\psellipse[linestyle=none,fillstyle=solid](0,0)(1,1)}
\rput{0}(15,27){\psellipse[linestyle=none,fillstyle=solid](0,0)(1,1)}
\rput{0}(19,27){\psellipse[linestyle=none,fillstyle=solid](0,0)(1,1)}
\rput{0}(23,27){\psellipse[linestyle=none,fillstyle=solid](0,0)(1,1)}
\rput{0}(6,40){\psellipse[linestyle=none,fillstyle=solid](0,0)(1,1)}
\rput{0}(10,40){\psellipse[linestyle=none,fillstyle=solid](0,0)(1,1)}
\rput{0}(8,43){\psellipse[linestyle=none,fillstyle=solid](0,0)(1,1)}
\rput{0}(17,51){\psellipse[](0,0)(4,-4)}
\rput{0}(15,50){\psellipse[linestyle=none,fillstyle=solid](0,0)(1,1)}
\rput{0}(19,50){\psellipse[linestyle=none,fillstyle=solid](0,0)(1,1)}
\rput{0}(17,53){\psellipse[linestyle=none,fillstyle=solid](0,0)(1,1)}
\rput{0}(26,41){\psellipse[](0,0)(4,-4)}
\rput{0}(24,40){\psellipse[linestyle=none,fillstyle=solid](0,0)(1,1)}
\rput{0}(28,40){\psellipse[linestyle=none,fillstyle=solid](0,0)(1,1)}
\rput{0}(26,43){\psellipse[linestyle=none,fillstyle=solid](0,0)(1,1)}
\rput(3.5,36){$A^1_1$}
\rput(31,36.5){$A^2_1$}
\rput(10,51){$A_1^3$}
\rput(2,38){}
\psline[linewidth=0.15]{->}(11,30)(7,36)
\psline[linewidth=0.15]{->}(11,30)(11,37.5)
\psline[linewidth=0.15]{->}(14.5,30.5)(15.5,46.5)
\psline[linewidth=0.15]{->}(19.5,30.5)(18.5,46.5)
\psline[linewidth=0.15]{->}(11,30)(22.5,37.5)
\psline[linewidth=0.15]{->}(23,30)(26,36.5)
\psline[linewidth=0.15]{->}(11,30)(25.5,36.5)
\psline[linewidth=0.15]{->}(23,30)(23,37.5)
\psline[linewidth=0.15]{->}(23,30)(11.5,37.5)
\psline[linewidth=0.15]{->}(23,30)(8,36)
\psline[linewidth=0.15]{->}(14.5,30.5)(18,46)
\psline[linewidth=0.15]{->}(19.5,30.5)(16,46)
\psline[linewidth=0.4](7.5,45.5)(13,49)
\psline[linewidth=0.4](11.5,44)(15,47)
\psline[linewidth=0.4](11.5,44)(13,49)
\psline[linewidth=0.4](7.5,45.5)(15,47)
\psline(19,47)(22.5,44)
\psline[linewidth=0.4](19,47)(26.5,45.5)
\psline[linewidth=0.4](26.5,45.5)(21,49)
\psline[linewidth=0.4](21,49)(22.5,44)
\psline[linewidth=0.4](12.5,42)(21.5,42)
\psline[linewidth=0.4](12.5,39.5)(21.5,39.5)
\psline[linewidth=0.4](21.5,39.5)(12.5,42)
\psline[linewidth=0.4](12.5,39.5)(21.5,42)
\rput{90}(57,40){\psellipse[](0,0)(20.12,-17)}
\rput(43,58){\Large{$A_m$}}
\rput{0}(57,27){\psellipse[](0,0)(10,-3)}
\rput(57,22){$T_m$}
\rput{0}(48,41){\psellipse[](0,0)(4,-4)}
\rput{0}(51,27){\psellipse[linestyle=none,fillstyle=solid](0,0)(1,1)}
\rput{0}(55,27){\psellipse[linestyle=none,fillstyle=solid](0,0)(1,1)}
\rput{0}(59,27){\psellipse[linestyle=none,fillstyle=solid](0,0)(1,1)}
\rput{0}(63,27){\psellipse[linestyle=none,fillstyle=solid](0,0)(1,1)}
\rput{0}(46,40){\psellipse[linestyle=none,fillstyle=solid](0,0)(1,1)}
\rput{0}(50,40){\psellipse[linestyle=none,fillstyle=solid](0,0)(1,1)}
\rput{0}(48,43){\psellipse[linestyle=none,fillstyle=solid](0,0)(1,1)}
\rput{0}(57,51){\psellipse[](0,0)(4,-4)}
\rput{0}(55,50){\psellipse[linestyle=none,fillstyle=solid](0,0)(1,1)}
\rput{0}(59,50){\psellipse[linestyle=none,fillstyle=solid](0,0)(1,1)}
\rput{0}(57,53){\psellipse[linestyle=none,fillstyle=solid](0,0)(1,1)}
\rput{0}(66,41){\psellipse[](0,0)(4,-4)}
\rput{0}(64,40){\psellipse[linestyle=none,fillstyle=solid](0,0)(1,1)}
\rput{0}(68,40){\psellipse[linestyle=none,fillstyle=solid](0,0)(1,1)}
\rput{0}(66,43){\psellipse[linestyle=none,fillstyle=solid](0,0)(1,1)}
\rput(43.5,36){$A^1_m$}
\rput(70,36){$A^2_m$}
\rput(50,51){$A_m^3$}
\rput(42,38){}
\psline[linewidth=0.15]{->}(51,30)(47,36)
\psline[linewidth=0.15]{->}(51,30)(51,37.5)
\psline[linewidth=0.15]{->}(54.5,30.5)(55.5,46.5)
\psline[linewidth=0.15]{->}(59.5,30.5)(58.5,46.5)
\psline[linewidth=0.15]{->}(51,30)(62.5,37.5)
\psline[linewidth=0.15]{->}(63,30)(66,36.5)
\psline[linewidth=0.15]{->}(51,30)(65.5,36.5)
\psline[linewidth=0.15]{->}(63,30)(63,37.5)
\psline[linewidth=0.15]{->}(63,30)(51.5,37.5)
\psline[linewidth=0.15]{->}(63,30)(48,36)
\psline[linewidth=0.15]{->}(54.5,30.5)(58,46)
\psline[linewidth=0.15]{->}(59.5,30.5)(56,46)
\psline[linewidth=0.4](47.5,45.5)(53,49)
\psline[linewidth=0.4](51.5,44)(55,47)
\psline[linewidth=0.4](51.5,44)(53,49)
\psline[linewidth=0.4](47.5,45.5)(55,47)
\psline(59,47)(62.5,44)
\psline[linewidth=0.4](59,47)(66.5,45.5)
\psline[linewidth=0.4](66.5,45.5)(61,49)
\psline[linewidth=0.4](61,49)(62.5,44)
\psline[linewidth=0.4](52.5,42)(61.5,42)
\psline[linewidth=0.4](52.5,39.5)(61.5,39.5)
\psline[linewidth=0.4](61.5,39.5)(52.5,42)
\psline[linewidth=0.4](52.5,39.5)(61.5,42)
\rput(37,40){$\ldots$}
\psline[linewidth=0.15]{->}(58,3.5)(93,54)
\psline[linewidth=0.15]{->}(57.5,4)(65,19)
\psline[linewidth=0.15]{->}(57,4)(59,19)
\psline[linewidth=0.15]{->}(56.5,4)(53,19)
\psline[linewidth=0.15]{->}(56,3.75)(41.5,19)
\psline[linewidth=0.15]{->}(55,3)(14,19)
\psline[linewidth=0.15]{->}(55.5,3.5)(29,19)
\rput(74.5,16){}
\psline[linewidth=0.15]{<-}(93.5,53.5)(81,28)
\psline[linewidth=0.15]{<-}(94,53.5)(89,28)
\psline[linewidth=0.15]{<-}(94.5,53.5)(97,28)
\psline[linewidth=0.15]{<-}(95,54)(106,28)
\psline[linewidth=0.15]{<-}(95.5,54.5)(115,28)
\rput(100,18.5){}
\rput{90}(109.5,13.5){\psellipse[](0,0)(13.5,-10.5)}
\rput{0}(108,8){\psellipse[](0,0)(7,-3)}
\rput{0}(104,8){\psellipse[linestyle=none,fillstyle=solid](0,0)(1,1)}
\rput{0}(108,8){\psellipse[linestyle=none,fillstyle=solid](0,0)(1,1)}
\rput{0}(112,8){\psellipse[linestyle=none,fillstyle=solid](0,0)(1,1)}
\rput{0}(108,19){\psellipse[](0,0)(7,-3)}
\rput{0}(104,19){\psellipse[linestyle=none,fillstyle=solid](0,0)(1,1)}
\rput{0}(108,19){\psellipse[linestyle=none,fillstyle=solid](0,0)(1,1)}
\rput{0}(112,19){\psellipse[linestyle=none,fillstyle=solid](0,0)(1,1)}
\rput(116.5,11){$X_n^+$}
\rput(116.5,16){$X_n^-$}
\rput(104,16){}
\psline[linewidth=0.4](104,16)(104,11)
\psline[linewidth=0.4](104,11)(112,16)
\psline[linewidth=0.4](112,16)(112,11)
\psline[linewidth=0.4](112,11)(104,16)
\rput(73.5,18.5){}
\rput{90}(83,13.5){\psellipse[](0,0)(13.5,-10.5)}
\rput{0}(81.5,8){\psellipse[](0,0)(7,-3)}
\rput{0}(77.5,8){\psellipse[linestyle=none,fillstyle=solid](0,0)(1,1)}
\rput{0}(81.5,8){\psellipse[linestyle=none,fillstyle=solid](0,0)(1,1)}
\rput{0}(85.5,8){\psellipse[linestyle=none,fillstyle=solid](0,0)(1,1)}
\rput{0}(81.5,19){\psellipse[](0,0)(7,-3)}
\rput{0}(77.5,19){\psellipse[linestyle=none,fillstyle=solid](0,0)(1,1)}
\rput{0}(81.5,19){\psellipse[linestyle=none,fillstyle=solid](0,0)(1,1)}
\rput{0}(85.5,19){\psellipse[linestyle=none,fillstyle=solid](0,0)(1,1)}
\rput(90,11){$X_1^+$}
\rput(90,16){$X_1^-$}
\rput(77.5,16){}
\psline[linewidth=0.4](77.5,16)(77.5,11)
\psline[linewidth=0.4](77.5,11)(85.5,16)
\psline[linewidth=0.4](85.5,16)(85.5,11)
\psline[linewidth=0.4](85.5,11)(77.5,16)
\rput(96.5,13){$\ldots$}
\rput(72,2){\Large{$X_1$}}
\rput(100,2){\Large{$X_n$}}
\rput(90.5,55){}
\rput(90.5,55){\Large{$q$}}
\rput(53,1){\Large{$p$}}
\end{pspicture}
\caption{The graph $G$ and its modular decomposition. The edges whose orientation is fixed by $\phi$ are drawn directed.}
\label{fig:graph}
\end{figure}



\begin{figure}[tb]
\centering\ifx\JPicScale\undefined\def\JPicScale{1}\fi
\psset{unit=\JPicScale mm}
\psset{linewidth=0.3,dotsep=1,hatchwidth=0.3,hatchsep=1.5,shadowsize=1,dimen=middle}
\psset{dotsize=0.7 2.5,dotscale=1 1,fillcolor=black}
\psset{arrowsize=1 2,arrowlength=1,arrowinset=0.25,tbarsize=0.7 5,bracketlength=0.15,rbracketlength=0.15}
\begin{pspicture}(0,0)(119.5,31)
\psline[linewidth=0.15](2,10)(85.5,10)
\psline[linewidth=0.15](2,15)(85.5,15)
\psline(2,20)(7,20)
\psline(84.5,20)(85.5,20)
\rput(4.5,21.5){$t_{1,1}$}
\rput(45.5,21.5){$t_{1,2}$}
\psline(28,25)(21,25)
\psline(21,26)(14,26)
\psline(14,25)(7,25)
\rput(10,27){$\alpha^1_{1,1}$}
\rput(24.5,27){$\alpha^1_{1,3}$}
\rput(17.5,28){$\alpha^{1}_{1,2}$}
\rput(0.5,10){$p$}
\rput(0.5,15){$q$}
\psline(28,3.5)(21,3.5)
\psline(21,4.5)(14,4.5)
\psline(14,3.5)(7,3.5)
\rput(10.5,5.5){$x^{+}_{1,1}$}
\rput(24.5,5.5){$x^{+}_{1,3}$}
\rput(17.5,6.5){$x^{+}_{1,2}$}
\psline(84.5,25)(77.5,25)
\psline(77.5,26)(70.5,26)
\psline(70.5,25)(63.5,25)
\psline(112.5,25)(105.5,25)
\psline(105.5,26)(98.5,26)
\psline(98.5,25)(91.5,25)
\psline(84.5,3.5)(77.5,3.5)
\psline(77.5,4.5)(70.5,4.5)
\psline(70.5,3.5)(63.5,3.5)
\psline(112.5,3.5)(105.5,3.5)
\psline(105.5,4.5)(98.5,4.5)
\psline(98.5,3.5)(91.5,3.5)
\rput(18,0.5){}
\rput(15.5,1){}
\psline[linewidth=0.15,linestyle=dotted](7,0)(7,31)
\psline[linewidth=0.15,linestyle=dotted](28,0)(28,31)
\psline[linewidth=0.15,linestyle=dotted](63.5,0)(63.5,31)
\psline[linewidth=0.15,linestyle=dotted](84.5,0)(84.5,31)
\psline[linewidth=0.15,linestyle=dotted](91.5,0)(91.5,31)
\psline[linewidth=0.15,linestyle=dotted](112.5,0)(112.5,31)
\psbezier[linewidth=0.15,fillcolor=white,fillstyle=solid](4.5,15)(4.5,15)(4.5,15)(4.5,15)
\psline(90.5,20)(91.5,20)
\psline[linestyle=dotted,fillcolor=white,fillstyle=solid](85.5,20)(90.5,20)
\psbezier[fillcolor=white,fillstyle=solid](28,25)(28,25)(28,25)(28,25)
\psline[fillcolor=white,fillstyle=solid](28,20)(63.5,20)
\psline(56,3.5)(49,3.5)
\psline(49,4.5)(42,4.5)
\psline(42,3.5)(35,3.5)
\rput(38.5,5.5){$x^{-}_{1,1}$}
\rput(52.5,5.5){$x^{-}_{1,3}$}
\rput(45.5,6.5){$x^{-}_{1,2}$}
\psline[linewidth=0.15,linestyle=dotted](85.5,15)(90.5,15)
\psline[linewidth=0.15,linestyle=dotted](85.5,10)(90.5,10)
\psline[linewidth=0.15](90.5,15)(113.5,15)
\psline[linewidth=0.15](90.5,10)(113.5,10)
\psline(112.5,20)(113.5,20)
\psline[linestyle=dotted,fillcolor=white,fillstyle=solid](114.5,20)(118.5,20)
\psline(118.5,20)(119.5,20)
\rput(88,21.5){$t_{1,3}$}
\rput(116,21.5){$t_{1,4}$}
\psline[linewidth=0.15,linestyle=dotted](113.5,15)(118.5,15)
\psline[linewidth=0.15,linestyle=dotted](113.5,10)(118.5,10)
\psline[linewidth=0.15](118.5,15)(119.5,15)
\psline[linewidth=0.15](118.5,10)(119.5,10)
\newrgbcolor{userLineColour}{0.6 0.6 0.6}
\newrgbcolor{userFillColour}{0.6 0.6 0.6}
\rput(88,5){$\ldots$}
\newrgbcolor{userLineColour}{0.6 0.6 0.6}
\newrgbcolor{userFillColour}{0.6 0.6 0.6}
\rput(88,4){}
\newrgbcolor{userLineColour}{0.6 0.6 0.6}
\newrgbcolor{userFillColour}{0.6 0.6 0.6}
\rput(116,5){$\ldots$}
\rput(67,5.5){$x^{+}_{2,1}$}
\rput(81,5.5){$x^{+}_{2,3}$}
\rput(74,6.5){$x^{+}_{2,2}$}
\rput(95,5.5){$x^{-}_{3,1}$}
\rput(109,5.5){$x^{-}_{3,3}$}
\rput(102,6.5){$x^{-}_{3,2}$}
\rput(66.5,27){$\alpha^2_{1,1}$}
\rput(81,27){$\alpha^2_{1,3}$}
\rput(74,28){$\alpha^{2}_{1,2}$}
\rput(94.5,27){$\alpha^3_{1,1}$}
\rput(109,27){$\alpha^3_{1,3}$}
\rput(102,28){$\alpha^{3}_{1,2}$}
\end{pspicture}
\caption{The partial functions representing the vertices from $X_1$ and $A_1$, for a clause $\alpha_1=\alpha_1^1\vee\alpha_1^2\vee\alpha_1^3$ with $\alpha_1^1=x_1$, $\alpha_1^2=x_2$, and $\alpha_1^3=\neg x_3$.}
\label{fig:clause}
\end{figure}


\begin{figure}[tb]
\centering\ifx\JPicScale\undefined\def\JPicScale{1}\fi
\psset{unit=\JPicScale mm}
\psset{linewidth=0.3,dotsep=1,hatchwidth=0.3,hatchsep=1.5,shadowsize=1,dimen=middle}
\psset{dotsize=0.7 2.5,dotscale=1 1,fillcolor=black}
\psset{arrowsize=1 2,arrowlength=1,arrowinset=0.25,tbarsize=0.7 5,bracketlength=0.15,rbracketlength=0.15}
\begin{pspicture}(0,0)(119.5,31)
\newrgbcolor{userLineColour}{0.6 0.6 0.6}
\newrgbcolor{userFillColour}{0.6 0.6 0.6}
\pspolygon[linecolor=userLineColour,fillcolor=userFillColour,fillstyle=solid](84.75,31)(91.5,31)(91.5,20)(84.75,20)
\newrgbcolor{userLineColour}{0.8 0.8 0.8}
\newrgbcolor{userFillColour}{0.6 0.6 0.6}
\pspolygon[linewidth=0,linecolor=userLineColour,fillcolor=userFillColour,fillstyle=solid](2,20)(7,20)(7,31)(2,31)
\newrgbcolor{userLineColour}{0.6 0.6 0.6}
\newrgbcolor{userFillColour}{0.6 0.6 0.6}
\pspolygon[linecolor=userLineColour,fillcolor=userFillColour,fillstyle=solid](112.5,31)(119.5,31)(119.5,20)(112.5,20)
\newrgbcolor{userLineColour}{0.6 0.6 0.6}
\newrgbcolor{userFillColour}{0.6 0.6 0.6}
\pspolygon[linecolor=userLineColour,fillcolor=userFillColour,fillstyle=solid](98.5,31)(105.5,31)(105.5,26)(98.5,26)
\newrgbcolor{userLineColour}{0.6 0.6 0.6}
\newrgbcolor{userFillColour}{0.6 0.6 0.6}
\pspolygon[linecolor=userLineColour,fillcolor=userFillColour,fillstyle=solid](70.5,31)(77.5,31)(77.5,26)(70.5,26)
\newrgbcolor{userLineColour}{0.6 0.6 0.6}
\newrgbcolor{userFillColour}{0.6 0.6 0.6}
\pspolygon[linecolor=userLineColour,fillcolor=userFillColour,fillstyle=solid](105.5,31)(112.5,31)(112.5,25)(105.5,25)
\newrgbcolor{userLineColour}{0.6 0.6 0.6}
\newrgbcolor{userFillColour}{0.6 0.6 0.6}
\pspolygon[linecolor=userLineColour,fillcolor=userFillColour,fillstyle=solid](91.5,31)(98.5,31)(98.5,25)(91.5,25)
\newrgbcolor{userLineColour}{0.6 0.6 0.6}
\newrgbcolor{userFillColour}{0.6 0.6 0.6}
\pspolygon[linecolor=userLineColour,fillcolor=userFillColour,fillstyle=solid](77.5,31)(84.5,31)(84.5,25)(77.5,25)
\newrgbcolor{userLineColour}{0.6 0.6 0.6}
\newrgbcolor{userFillColour}{0.6 0.6 0.6}
\pspolygon[linecolor=userLineColour,fillcolor=userFillColour,fillstyle=solid](63.5,31)(70.5,31)(70.5,25)(63.5,25)
\newrgbcolor{userLineColour}{0.6 0.6 0.6}
\newrgbcolor{userFillColour}{0.6 0.6 0.6}
\pspolygon[linecolor=userLineColour,fillcolor=userFillColour,fillstyle=solid](28,31)(63.5,31)(63.5,20)(28,20)
\newrgbcolor{userLineColour}{0.6 0.6 0.6}
\newrgbcolor{userFillColour}{0.6 0.6 0.6}
\pspolygon[linecolor=userLineColour,fillcolor=userFillColour,fillstyle=solid](14,31)(28,31)(28,10)(14,10)
\psline[linewidth=0.15](2,10)(85.5,10)
\psline[linewidth=0.15](2,15)(85.5,15)
\psline(2,20)(7,20)
\psline(84.5,20)(85.5,20)
\rput(4.5,21.5){$t_{1,1}$}
\rput(45.5,21.5){$t_{1,2}$}
\psline(28,25)(21,25)
\psline(21,26)(14,26)
\newrgbcolor{userLineColour}{0.6 0.6 0.6}
\psline[linecolor=userLineColour](14,25)(7,25)
\rput(10,27){$\alpha^1_{1,1}$}
\rput(24.5,27){$\alpha^1_{1,3}$}
\rput(17.5,28){$\alpha^{1}_{1,2}$}
\rput(0.5,10){$p$}
\rput(0.5,15){$q$}
\psline(28,3.5)(21,3.5)
\psline(21,4.5)(14,4.5)
\psline(14,3.5)(7,3.5)
\rput(10.5,5.5){$x^{+}_{1,1}$}
\rput(24.5,5.5){$x^{+}_{1,3}$}
\rput(17.5,6.5){$x^{+}_{1,2}$}
\psline(84.5,25)(77.5,25)
\psline(77.5,26)(70.5,26)
\psline(70.5,25)(63.5,25)
\psline(112.5,25)(105.5,25)
\psline(105.5,26)(98.5,26)
\psline(98.5,25)(91.5,25)
\psline(84.5,3.5)(77.5,3.5)
\psline(77.5,4.5)(70.5,4.5)
\psline(70.5,3.5)(63.5,3.5)
\psline(112.5,3.5)(105.5,3.5)
\psline(105.5,4.5)(98.5,4.5)
\psline(98.5,3.5)(91.5,3.5)
\rput(18,0.5){}
\rput(15.5,1){}
\psline[linewidth=0.15,linestyle=dotted](7,0)(7,31)
\psline[linewidth=0.15,linestyle=dotted](28,0)(28,31)
\psline[linewidth=0.15,linestyle=dotted](63.5,0)(63.5,31)
\psline[linewidth=0.15,linestyle=dotted](84.5,0)(84.5,31)
\psline[linewidth=0.15,linestyle=dotted](91.5,0)(91.5,31)
\psline[linewidth=0.15,linestyle=dotted](112.5,0)(112.5,31)
\psbezier[linewidth=0.15,fillcolor=white,fillstyle=solid](4.5,15)(4.5,15)(4.5,15)(4.5,15)
\psline(90.5,20)(91.5,20)
\psline[linestyle=dotted,fillcolor=white,fillstyle=solid](85.5,20)(90.5,20)
\psbezier[fillcolor=white,fillstyle=solid](28,25)(28,25)(28,25)(28,25)
\psline[fillcolor=white,fillstyle=solid](28,20)(63.5,20)
\psline(56,3.5)(49,3.5)
\psline(49,4.5)(42,4.5)
\psline(42,3.5)(35,3.5)
\rput(38.5,5.5){$x^{-}_{1,1}$}
\rput(52.5,5.5){$x^{-}_{1,3}$}
\rput(45.5,6.5){$x^{-}_{1,2}$}
\psline[linewidth=0.15,linestyle=dotted](85.5,15)(90.5,15)
\psline[linewidth=0.15,linestyle=dotted](85.5,10)(90.5,10)
\psline[linewidth=0.15](90.5,15)(113.5,15)
\psline[linewidth=0.15](90.5,10)(113.5,10)
\psline(112.5,20)(113.5,20)
\psline[linestyle=dotted,fillcolor=white,fillstyle=solid](113.5,20)(118.5,20)
\psline(118.5,20)(119.5,20)
\rput(88,21.5){$t_{1,3}$}
\rput(116,21.5){$t_{1,4}$}
\psline[linewidth=0.15,linestyle=dotted](113.5,15)(118.5,15)
\psline[linewidth=0.15,linestyle=dotted](113.5,10)(118.5,10)
\psline[linewidth=0.15](118.5,15)(119.5,15)
\psline[linewidth=0.15](118.5,10)(119.5,10)
\newrgbcolor{userLineColour}{0.6 0.6 0.6}
\newrgbcolor{userFillColour}{0.6 0.6 0.6}
\rput(88,5){$\ldots$}
\newrgbcolor{userLineColour}{0.6 0.6 0.6}
\newrgbcolor{userFillColour}{0.6 0.6 0.6}
\rput(116.5,5){$\ldots$}
\rput(66.5,27){$\alpha^2_{1,1}$}
\rput(81,27){$\alpha^2_{1,3}$}
\rput(74,28){$\alpha^{2}_{1,2}$}
\rput(94.5,27){$\alpha^3_{1,1}$}
\rput(109,27){$\alpha^3_{1,3}$}
\rput(102,28){$\alpha^{3}_{1,2}$}
\rput(67,5.5){$x^{+}_{2,1}$}
\rput(81,5.5){$x^{+}_{2,3}$}
\rput(74,6.5){$x^{+}_{2,2}$}
\rput(95,5.5){$x^{-}_{3,1}$}
\rput(109,5.5){$x^{-}_{3,3}$}
\rput(102,6.5){$x^{-}_{3,2}$}
\end{pspicture}
\caption{$\Reg_P(\alpha_{1,1}^1)$ for a clause $\alpha_1=\alpha_1^1\vee\alpha_1^2\vee\alpha_1^3$ with $\alpha_1^1=x_1$, $\alpha_1^2=x_2$, and $\alpha_1^3=\neg x_3$ and an orientation of the module $A_1$ such that $A_1^1>_PA_1^2,A_1^3$.}
\label{fig:region-clause}
\end{figure}

\begin{figure}[tb]
\centering\ifx\JPicScale\undefined\def\JPicScale{1}\fi
\psset{unit=\JPicScale mm}
\psset{linewidth=0.3,dotsep=1,hatchwidth=0.3,hatchsep=1.5,shadowsize=1,dimen=middle}
\psset{dotsize=0.7 2.5,dotscale=1 1,fillcolor=black}
\psset{arrowsize=1 2,arrowlength=1,arrowinset=0.25,tbarsize=0.7 5,bracketlength=0.15,rbracketlength=0.15}
\begin{pspicture}(0,0)(120,17.2)
\newrgbcolor{userLineColour}{0.6 0.6 0.6}
\newrgbcolor{userFillColour}{0.6 0.6 0.6}
\pspolygon[linecolor=userLineColour,fillcolor=userFillColour,fillstyle=solid](2,15)(5,15)(5,0)(2,0)
\newrgbcolor{userLineColour}{0.6 0.6 0.6}
\newrgbcolor{userFillColour}{0.6 0.6 0.6}
\pspolygon[linecolor=userLineColour,fillcolor=userFillColour,fillstyle=solid](38,15)(57.5,15)(57.5,0)(38,0)
\newrgbcolor{userLineColour}{0.6 0.6 0.6}
\newrgbcolor{userFillColour}{0.6 0.6 0.6}
\pspolygon[linecolor=userLineColour,fillcolor=userFillColour,fillstyle=solid](26,15)(31,15)(31,0)(26,0)
\newrgbcolor{userLineColour}{0.6 0.6 0.6}
\newrgbcolor{userFillColour}{0.6 0.6 0.6}
\pspolygon[linecolor=userLineColour,fillcolor=userFillColour,fillstyle=solid](19,3.5)(26,3.5)(26,0)(19,0)
\newrgbcolor{userLineColour}{0.6 0.6 0.6}
\newrgbcolor{userFillColour}{0.6 0.6 0.6}
\pspolygon[linecolor=userLineColour,fillcolor=userFillColour,fillstyle=solid](12,4.5)(19,4.5)(19,0)(12,0)
\newrgbcolor{userLineColour}{0.6 0.6 0.6}
\newrgbcolor{userFillColour}{0.6 0.6 0.6}
\pspolygon[linecolor=userLineColour,fillcolor=userFillColour,fillstyle=solid](5,3.5)(12,3.5)(12,0)(5,0)
\psline[linewidth=0.15](2,10)(52.5,10)
\psline[linewidth=0.15](2,15)(52.5,15)
\rput(0.5,10){$p$}
\rput(0.5,15){$q$}
\psline(26,3.5)(19,3.5)
\psline(19,4.5)(12,4.5)
\psline(12,3.5)(5,3.5)
\rput(8.5,5.5){$x^{+}_{1,1}$}
\rput(22.5,5.5){$x^{+}_{1,3}$}
\rput(15.5,6.5){$x^{+}_{1,2}$}
\rput(18,0.5){}
\rput(13.5,1){}
\psbezier[linewidth=0.15,fillcolor=white,fillstyle=solid](4.5,15)(4.5,15)(4.5,15)(4.5,15)
\psline(52,3.5)(45,3.5)
\psline(45,4.5)(38,4.5)
\newrgbcolor{userLineColour}{0.6 0.6 0.6}
\psline[linecolor=userLineColour](38,3.5)(31,3.5)
\psline[linewidth=0.15,linestyle=dotted](52.5,15)(56.5,15)
\psline[linewidth=0.15,linestyle=dotted](52.5,10)(56.5,10)
\psline[linewidth=0.15](56.5,15)(58,15)
\psline[linewidth=0.15](56.5,10)(57.5,10)
\psline[linewidth=0.15,linestyle=dotted,fillcolor=white,fillstyle=solid](5,0)(5,17)
\psline[linewidth=0.15,linestyle=dotted,fillcolor=white,fillstyle=solid](26,0)(26,17)
\psline[linewidth=0.15,linestyle=dotted,fillcolor=white,fillstyle=solid](31,0)(31,17)
\psline[linewidth=0.15,linestyle=dotted,fillcolor=white,fillstyle=solid](52,0)(52,17)
\rput(34.5,5.5){$x^{-}_{1,1}$}
\rput(48.5,5.5){$x^{-}_{1,3}$}
\rput(41.5,6.5){$x^{-}_{1,2}$}
\newrgbcolor{userLineColour}{0.6 0.6 0.6}
\newrgbcolor{userFillColour}{0.6 0.6 0.6}
\pspolygon[linecolor=userLineColour,fillcolor=userFillColour,fillstyle=solid](64.5,15)(67.5,15)(67.5,0)(64.5,0)
\newrgbcolor{userLineColour}{0.6 0.6 0.6}
\newrgbcolor{userFillColour}{0.6 0.6 0.6}
\pspolygon[linecolor=userLineColour,fillcolor=userFillColour,fillstyle=solid](100.5,15)(120,15)(120,0)(100.5,0)
\newrgbcolor{userLineColour}{0.6 0.6 0.6}
\newrgbcolor{userFillColour}{0.6 0.6 0.6}
\pspolygon[linecolor=userLineColour,fillcolor=userFillColour,fillstyle=solid](88.5,15)(93.5,15)(93.5,0)(88.5,0)
\newrgbcolor{userLineColour}{0.6 0.6 0.6}
\newrgbcolor{userFillColour}{0.6 0.6 0.6}
\pspolygon[linecolor=userLineColour,fillcolor=userFillColour,fillstyle=solid](81.5,15)(88.5,15)(88.5,3.5)(81.5,3.5)
\newrgbcolor{userLineColour}{0.6 0.6 0.6}
\newrgbcolor{userFillColour}{0.6 0.6 0.6}
\pspolygon[linecolor=userLineColour,fillcolor=userFillColour,fillstyle=solid](74.5,15)(81.5,15)(81.5,4.5)(74.5,4.5)
\newrgbcolor{userLineColour}{0.6 0.6 0.6}
\newrgbcolor{userFillColour}{0.6 0.6 0.6}
\pspolygon[linecolor=userLineColour,fillcolor=userFillColour,fillstyle=solid](67.5,15)(74.5,15)(74.5,3.5)(67.5,3.5)
\psline[linewidth=0.15](64.5,10)(115,10)
\psline[linewidth=0.15](64.5,15)(115,15)
\rput(63,10){$p$}
\rput(63,15){$q$}
\psline(88.5,3.5)(81.5,3.5)
\psline(81.5,4.5)(74.5,4.5)
\psline(74.5,3.5)(67.5,3.5)
\rput(71,5.5){$x^{+}_{1,1}$}
\rput(85,5.5){$x^{+}_{1,3}$}
\rput(78,6.5){$x^{+}_{1,2}$}
\rput(80.5,0.5){}
\rput(76,1){}
\psbezier[linewidth=0.15,fillcolor=white,fillstyle=solid](67,15)(67,15)(67,15)(67,15)
\psline(114.5,3.5)(107.5,3.5)
\psline(107.5,4.5)(100.5,4.5)
\newrgbcolor{userLineColour}{0.6 0.6 0.6}
\psline[linecolor=userLineColour](100.5,3.5)(93.5,3.5)
\psline[linewidth=0.15,linestyle=dotted](115,15)(119,15)
\psline[linewidth=0.15,linestyle=dotted](115,10)(119,10)
\psline[linewidth=0.15](119,15)(120,15)
\psline[linewidth=0.15](119,10)(120,10)
\psline[linewidth=0.15,linestyle=dotted,fillcolor=white,fillstyle=solid](67.4,-0.2)(67.4,17.2)
\rput(97,5.5){$x^{-}_{1,1}$}
\rput(111,5.5){$x^{-}_{1,3}$}
\rput(104,6.5){$x^{-}_{1,2}$}
\newrgbcolor{userLineColour}{0.6 0.6 0.6}
\newrgbcolor{userFillColour}{0.6 0.6 0.6}
\rput(117.5,4.5){$\ldots$}
\newrgbcolor{userLineColour}{0.6 0.6 0.6}
\newrgbcolor{userFillColour}{0.6 0.6 0.6}
\rput(54.5,4.5){$\ldots$}
\psline[linewidth=0.15,linestyle=dotted,fillcolor=white,fillstyle=solid](88.6,-0.4)(88.6,17)
\psline[linewidth=0.15,linestyle=dotted,fillcolor=white,fillstyle=solid](93.6,-0.2)(93.6,17.2)
\psline[linewidth=0.15,linestyle=dotted,fillcolor=white,fillstyle=solid](100.4,-0.2)(100.4,17.2)
\psline[linewidth=0.15,linestyle=dotted,fillcolor=white,fillstyle=solid](114.5,-0.4)(114.5,17)
\end{pspicture}
\caption{$\Reg_P(x_{1,1}^-)$ for $X_1^-<_PX_1^+$ (to the left) and $X_1^+<_PX_1^-$ (to the right).}
\label{fig:region-variable}
\end{figure}


To prove that the latter problem is NP-hard, we show a polynomial-time reduction from 3-SAT\@.
Let $\alpha=\alpha_1\wedge\ldots\wedge\alpha_m$ be a boolean formula over variables $x_1,\ldots,x_n$, where each clause $\alpha_j$ is of the form $\alpha_j=\alpha_j^1\vee\alpha_j^2\vee\alpha_j^3$ with $\alpha_j^k\in\{x_1,\neg x_1,\ldots,x_n,\neg x_n\}$.
We construct a comparability graph $G$ and its partial representation $\phi:V(G)\to\calF^\circ$ that is extendable if and only if $\alpha$ is satisfiable.
The vertex set of $G$ consists of groups $X_i$ of six vertices corresponding to variables, groups $A_j$ of thirteen vertices corresponding to clauses, and two special vertices $p$ and $q$.
The edges and the modular decomposition of $G$ are illustrated in Fig.\ \ref{fig:graph}.
The partial representation $\phi$ is illustrated in Fig.\ \ref{fig:clause}, and the orientations of edges that are common for all transitive orientations of $G$ respecting $\phi$ are shown again in Fig.\ \ref{fig:graph}.
Every valuation satisfying $\alpha$ corresponds to a transitive orientation $P$ of $G$ that respects $\phi$ and satisfies $\Reg_P(u)\cap\Reg_P(v)\neq\emptyset$ for any two incomparable vertices $u$ and $v$, as follows: $x_i$ is true if and only if $X_i^+<_PX_i^-$, and $\alpha_j^k$ is satisfied if $A_j^k$ is maximal among $A_j^1,A_j^2,A_j^3$ with respect to $<_P$.
This together with Lemma \ref{lem:regions-partial} implies that $\alpha$ is satisfiable if and only if $\phi$ is extendable.
See the Appendix for the full proof.
Figures \ref{fig:region-clause} and \ref{fig:region-variable} illustrate how the regions of the vertices depend on the chosen orientation.

%\begin{theorem}
%It is NP-complete to decide given a comparability graph\/ $G$ and its partial representation\/ $\phi:V(G)\to\calF^\circ$ whether\/ $\phi$ is extendable to a representation of\/ $G$.
%\end{theorem}

\newpage

\appendix
\section*{Appendix: Omitted proofs}

\def\corel{\mathrel\Gamma}

\section{Proofs of Propositions \ref{prop:comp_ext} and \ref{prop:perm_ext}}

We describe in details how recognition algorithms for comparability and permutation graphs work, and show how to generalize them to obtain a partial orientation extension algorithm for comparability graphs and a partial representation extension algorithm for permutation graphs.
The former has been already sketched in Section \ref{sec:comparability}.

\heading{Recognition of \comp.} We define the relation $\corel$ on the edges. Let $x$, $y$ and $z$
be vertices. We write $xy \corel yz$ if and only if $xy \in E$, $yz \in E$ but $xz \notin E$. The
algorithm is based on the following key observation:

\begin{claim}
Let $x$, $y$ and $z$ be vertices of a comparability graph, such that $xy \corel yz$. Then every
transitive orientation of the graph orients the both $xy$ and $yz$ either to $y$, or from $y$.
\end{claim}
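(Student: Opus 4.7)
The plan is to argue by contradiction, directly exploiting the definition of transitivity together with the hypothesis that $xz \notin E$. The claim asserts that the two edges $xy$ and $yz$ must have the same ``polarity'' with respect to $y$ (both into $y$ or both out of $y$), so the natural thing to rule out is the two mixed configurations.

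Concretely, I would fix an arbitrary transitive orientation of the graph, and suppose toward a contradiction that the edges $xy$ and $yz$ are oriented with opposite polarity at $y$. There are two symmetric cases. In the first case, $x \to y$ and $y \to z$; by transitivity of the orientation this forces $x \to z$, and in particular $xz \in E$, contradicting the hypothesis $xy \corel yz$. In the second case, $z \to y$ and $y \to x$; by transitivity we obtain $z \to x$, again contradicting $xz \notin E$. Hence the only possibilities are $x \to y,\ z \to y$ (both into $y$) or $y \to x,\ y \to z$ (both out of $y$), which is exactly the statement of the claim.

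There is essentially no obstacle: the proof is a one-step application of the transitivity rule listed at the beginning of Section~\ref{sec:comparability} ($u \to v$ and $v \to w$ force $u \to w$). The only thing worth pointing out in the writeup is that the argument works for \emph{every} transitive orientation, so the conclusion is uniform across all of them, which is what later justifies propagating the forcing rule $xy \corel yz$ through the edge set during the recognition/extension algorithm.
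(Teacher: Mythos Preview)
Your argument is correct and is exactly the standard proof of this elementary observation; the paper itself states the claim without proof, treating it as a key observation underlying Golumbic's algorithm, so there is nothing further to compare.
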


The algorithm works in the following way. We start with an undirected graph. We repeat the following step till
all the edges are oriented. In the beginning, the algorithm picks an arbitrary non-oriented edge and
orients it in an arbitrary direction. This maybe forces several other edges to be oriented in some
directions. According to transitivity and the relation $\corel$, we orient every edge for which the
direction is now given. If we are forced to change the direction of an edge, the algorithm fails.

Golumbic \cite{Golumbic1,Golumbic2} proved that this algorithm fails if and only if the graph is not a
comparability graph, independently on the choices of the edges in the first part of every step. A
straightforward implementation of this algorithm works in time $O((n+m)\Delta)$. We need to
orient every edge, which makes $O(n+m)$ steps. After orienting any edge we check all the incident
edges whether some other orientations are forced, which can be done in $O(\Delta)$. Also, this
algorithm allows to find other transitive orientations with a polynomial delay $O((n+m)\Delta)$,
by choosing different orientations of edges in the beginning of every step.

\heading{Permutation Graphs.} A permutation graph is given by a permutation $\pi$ of the elements
$\{1,\dots,n\}$. The vertices of the graph are the elements of the permutation. Two vertices $x$ and
$y$, $x < y$, are adjacent if and only if $\pi(x) > \pi(y)$. A pair of adjacent vertices is called
an \emph{inversion} of the permutation. We denote the class of permutation graphs by \perm.

We can represent a permutation graph as an intersection graph of segments in the plane, see
Figure~\ref{permutation_graph}. We place two copies of points $\{1,\dots,n\}$ on two parallel lines.
To a vertex $x$, we assign a segment from $x$ on the top line to $\pi(x)$ on the bottom line. It is
easy to show that permutations graphs are exactly intersection graphs of segments in the plane with
distinct ends touching two parallel lines.

\begin{figure}[t]
\centering
\includegraphics{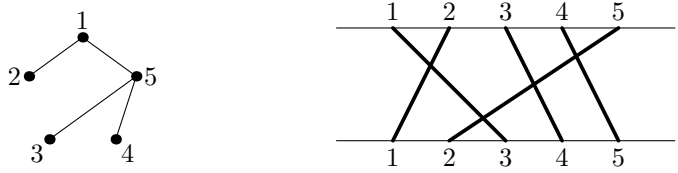}
\caption{A permutation graph of the permutation $\pi = (3,1,4,5,2)$ with an intersection
representation by segments touching two parallel lines.}
\label{permutation_graph}
\end{figure}

\heading{Recognition of \perm.} Their recognition is using comparability graphs, based on the
following characterization, by Even, Pnueli and Lempel~\cite{perm}: $$\perm = \comp
\cap \co\comp.$$ To recognize a permutation graph, it is sufficient to check whether both this graph
and its complement are comparability graphs. First, by constructing a segment representation, we
show that every graph from $\comp \cap \co\comp$ is also a permutation graph.

Let $G$ be a graph from $\comp \cap \co\comp$. We can transitively orient its edges
$\overrightarrow{E_1}$ and the edges of its complement $\overrightarrow{E_2}$. Together, the
orientations $\overrightarrow{E_1}$ and $\overrightarrow{E_2}$ form a transitive orientation of a
complete graph---a linear ordering. See Figure~\ref{perm_is_co_and_coco}.

\begin{figure}[b]
\centering
\includegraphics{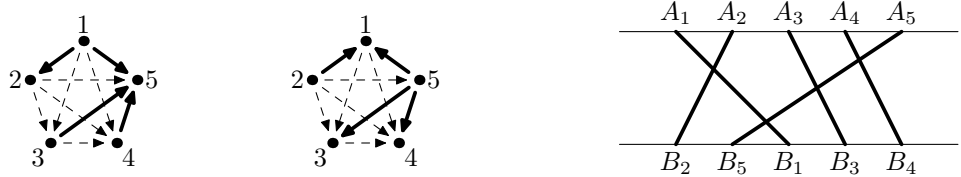}
\caption{Constructing a segment representation from orientations $\protect\overrightarrow{E_1}$ and
$\protect\overrightarrow{E_2}$.}
\label{perm_is_co_and_coco}
\end{figure}

Now, we have two parallel lines in the plane between which we want to place segments representing
the vertices of the graph. On the top line, we place points $A_1, \dots, A_n$ representing the
vertices in the linear ordering given by $\overrightarrow{E_1} \cup \overrightarrow{E_2}$. For the
bottom line, we reverse the orientation $\overrightarrow{E_1}$ and place points $B_1, \dots, B_n$
according to the linear ordering $\overleftarrow{E_1} \cup \overrightarrow{E_2}$. To a vertex $v$,
we assign the segment $A_vB_v$.

We need to verify that these segments give an intersection representation of the graph. If two
vertices are adjacent, their orders on the top line and on the bottom are different and the
corresponding segments intersect. On the other hand, if two vertices are non-adjacent, their orders
are the same and the segments do not intersect. So, we obtain a valid intersection representation of
the graph $G$ which implies $G$ is a permutation graph.

To obtain the other inclusion, notice that a segment representation gives transitive orientations
$\overrightarrow{E_1}$ and $\overrightarrow{E_2}$.  Another way is to observe that $\perm =
\co\perm$ (by reversing the bottom line) and $\perm \subseteq \comp$ (similarly to function graphs
described below).

\heading{Extending \perm.} A partial representation places several segments between two parallel
lines. The problem $\ext(\perm)$ asks whether it is possible to add the rest of the segments to
obtain a representation of a given permutation graph. Notice that all the segments have distinct
endpoints. 

Let $G$ be an input graph. This graph is a permutation graph if and only if $G$ and $\overline G$
are comparability graphs. The partial representation fixes directions of several edges of $G$ and
$\overline G$. We orient edges according to the ordering of the endpoints on the top line from left
to right. If two segments intersect, the corresponding edge is oriented in $G$, otherwise in
$\overline G$.

For the partially oriented $G$ and $\overline G$, we run the algorithm for comparability graph
extension which is described above. If extending is not possible, the algorithm fails.  Otherwise,
we obtain transitive orderings $\overrightarrow{E_1}$ and $\overrightarrow{E_2}$. We place the rest
of the points $A_1, \dots, A_n$ and $B_1, \dots, B_n$ in the correct order (i.e., $\overrightarrow{E_1} \cup \overrightarrow{E_2}$ on the top line and $\overleftarrow{E_1} \cup \overrightarrow{E_2}$ on the bottom one), and construct a
representation of $G$ as described above---a vertex $v$ is represented by a segment $A_vB_v$. 

\begin{proposition}
The described algorithm solves $\ext(\perm)$ correctly, and the running time is $O(n^3)$.
\end{proposition}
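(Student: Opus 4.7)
The plan is to establish correctness in both directions and then bound the running time. For the forward direction, given any extension $\psi$ of $\phi$, the orderings of top and bottom endpoints under $\psi$ yield a pair $(\overrightarrow{E_1},\overrightarrow{E_2})$ of transitive orientations of $G$ and $\overline{G}$ in the standard fashion (orient an edge of $G$ by the top-line order, and orient a non-edge of $G$ likewise). Since $\psi|_R = \phi$, these orientations restricted to the represented part agree with the partial orientations of $G$ and $\overline{G}$ derived from $\phi$, so by Proposition \ref{prop:comp_ext} the two invocations of the comparability extension algorithm must succeed and return valid transitive orientations.

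For the reverse direction, suppose the two calls succeed and return transitive orientations $\overrightarrow{E_1}$ of $G$ and $\overrightarrow{E_2}$ of $\overline{G}$. Then $\overrightarrow{E_1} \cup \overrightarrow{E_2}$ and $\overleftarrow{E_1} \cup \overrightarrow{E_2}$ are linear orders on $V(G)$, and placing points on the top and bottom lines in these orders and drawing the corresponding segments yields an intersection representation of $G$, exactly as in the recognition argument for $\perm$. The main obstacle is arguing that this layout can be made consistent with the fixed positions prescribed by $\phi$. This follows because, by construction of the partial orientations, for every pair of represented vertices the two linear orders restrict to the actual top-line and bottom-line orders of the endpoints under $\phi$. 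Hence we may keep every represented endpoint at its prescribed coordinate and insert each non-represented endpoint into the appropriate open gap between consecutive represented endpoints on each line, obtaining a full representation $\psi$ with $\psi|_R = \phi$.

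For the running time, reading off the partial orientations of $G$ and $\overline{G}$ from $\phi$ takes $O(n^2)$ time. Each call to the comparability graph extension algorithm runs in $O((n+m)\Delta)$ time, which is $O(n^3)$ in the worst case since $m = O(n^2)$ and $\Delta = O(n)$, and we make one call for $G$ and one for $\overline{G}$. Placing the non-represented endpoints into the gaps takes $O(n \log n)$ additional time. Hence the total running time is $O(n^3)$, as claimed.
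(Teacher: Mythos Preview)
Your proposal is correct and follows essentially the same approach as the paper: both argue that an extending representation induces transitive orientations of $G$ and $\overline{G}$ agreeing with the partial ones (so the two $\orientext$ calls succeed), and conversely that the two returned orientations yield linear orders on the endpoints from which a representation is built by inserting the non-represented points in the appropriate gaps, with the running time dominated by the two $O((n+m)\Delta)=O(n^3)$ comparability-extension calls. Your write-up is in fact more explicit than the paper's about why the constructed representation actually satisfies $\psi|_R=\phi$ (the gap-insertion argument), which the paper leaves implicit.
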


\begin{proof}
If $G$ has a segment representation extending the partial representation, then this representation
does not change the order of the placed segments.  Therefore, the corresponding transitive
orientations $\overrightarrow{E_1}$ and $\overrightarrow{E_2}$ have the edges oriented according to
the partial representation. It is possible to extend the partially oriented $G$ and $\overline G$.
Using these orientations, we construct a representation.

The running time is $O(n^3)$ since we need to run two instances of the algorithm solving
$\ext(\comp)$. To find all the representations, we use that the algorithm for $\ext(\comp)$ has the
same property.
\end{proof}

\section{Proofs of Lemmas \ref{lem:regions}, \ref{lem:reductions}, \ref{lem:no-of-orientations}, and \ref{lem:regions-partial}}

For a poset $P$ and a set $S\subset V(P)$, a function $\sigma:S\to\setR$ is \emph{$P$-compatible} if we have $\sigma(u)<\sigma(v)$ for any $u,v\in S$ with $u<_Pv$.

\begin{lemma}\label{lem:functions}
For every set\/ $S\subset V(P)$ and every\/ $P$-compatible function\/ $\sigma:S\to\setR$ there is a\/ $P$-compatible function\/ $\tau:V(P)\to\setR$ with\/ $\tau|_S=\sigma$.
\end{lemma}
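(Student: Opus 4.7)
The plan is to proceed by induction on the number of unassigned vertices $|V(P) \setminus S|$. The base case $S = V(P)$ is trivial: set $\tau = \sigma$. For the inductive step, it suffices to show that any $P$-compatible $\sigma : S \to \setR$ with $S \subsetneq V(P)$ can be extended by one vertex; that is, for any chosen $v \in V(P) \setminus S$, there exists a $P$-compatible $\sigma' : S \cup \{v\} \to \setR$ with $\sigma'|_S = \sigma$.

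To produce this one-vertex extension, pick $v \in V(P) \setminus S$ arbitrarily and set
\[
L = \max\{\sigma(u) : u \in S \text{ and } u <_P v\}, \qquad U = \min\{\sigma(w) : w \in S \text{ and } v <_P w\},
\]
with the conventions $\max\emptyset = -\infty$ and $\min\emptyset = +\infty$. The key claim is that $L < U$. Indeed, for any $u, w \in S$ with $u <_P v <_P w$, transitivity of $<_P$ gives $u <_P w$, and $P$-compatibility of $\sigma$ then yields $\sigma(u) < \sigma(w)$. Thus the open interval $(L, U)$ is non-empty; picking any real $\sigma'(v) \in (L, U)$ and setting $\sigma'|_S = \sigma$ defines a $P$-compatible function on $S \cup \{v\}$.

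The inductive hypothesis, applied to $\sigma'$ and the set $S \cup \{v\}$, then produces the desired $P$-compatible extension $\tau : V(P) \to \setR$. There is no real obstacle in this argument: the only thing to verify is the inequality $L < U$, which reduces directly to transitivity of $<_P$ together with the compatibility of $\sigma$ on pairs already in $S$. No topological or metric considerations enter, since we are free to choose any real in the open interval $(L, U)$.
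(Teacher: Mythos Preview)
Your proof is correct and follows essentially the same approach as the paper: both argue by induction on $|V(P)\setminus S|$, extend by one vertex at a time, and use transitivity together with $P$-compatibility of $\sigma$ to show that the interval $(L,U)$ (the paper's $(y_1,y_2)$) is non-empty. The only cosmetic difference is notation.
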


\begin{proof}
The proof goes by induction on $|V(P)-S|$.
For $S=V(P)$ the conclusion holds trivially.
Now, assume $S\subsetneq V(P)$.
Choose any $u\in V(P)-S$.
Define
\begin{align*}
y_1&=\max\{\sigma(v):v\in S\text{ and }v<_Pu\},\\
y_2&=\min\{\sigma(v):v\in S\text{ and }v>_Pu\},
\end{align*}
where we adopt the convention that $\max\emptyset=-\infty$ and $\min\emptyset=+\infty$.
If $y_1>-\infty$ and $y_2<+\infty$, then there are $v_1,v_2\in S$ with $\sigma(v_1)=y_1$, $\sigma(v_2)=y_2$, and $v_1<_Pu<_Pv_2$, which by $P$-compatibility of $\sigma$ implies $y_1<y_2$.
Choose any $y\in(y_1,y_2)$.
Set $S'=S\cup\{u\}$, $\sigma'(u)=y$, and $\sigma'(v)=\sigma(v)$ for every $v\in S$.
Clearly, the function $\sigma':S'\to\setR$ is $P$-compatible.
Therefore, by the induction hypothesis, there is a $P$-compatible function $\tau:V(P)\to\setR$ with $\tau|_{S'}=\sigma'$.
It follows that $\tau|_S=\sigma$, which completes the proof.\qed
\end{proof}

\begin{proof}[Lemma \ref{lem:regions}]
Suppose that $P$ has a representation $\psi:V(G)\to\calF$ extending $\phi$.
Choose any two incomparable vertices $u$ and $v$ of $P$.
The functions $\psi(u)$ and $\psi(v)$ must intersect at some point, which belongs to both $\Reg(u)$ and $\Reg(v)$.
Hence $\Reg(u)\cap\Reg(v)\neq\emptyset$.

Now, suppose that we have $\Reg(u)\cap\Reg(v)\neq\emptyset$ for any two incomparable vertices $u$ and $v$ of $P$.
For convenience, we assume that the coordinates $x_0,\ldots,x_k$ used to describe functions $\phi(a)$ are common for all $a\in R$ (we can achieve this by adding new coordinates to the descriptions where necessary).
We show how to construct a representation $\psi:V(G)\to\calF$ of $P$ extending $\phi$.

First, we define the functions $\psi(*)$ on the coordinates $x_0,\ldots,x_k$.
Let $i\in\{0,\ldots,n\}$.
Define a function $\rho:R\to\setR$ by setting $\rho(a)=\phi(a)(x_i)$ for every $a\in R$.
Clearly, $\rho$ is $P$-compatible.
By Lemma \ref{lem:functions}, there is a $P$-compatible function $\tau:V(P)\to\setR$ with $\tau|_R=\rho$.
Set $\psi(u)(x_i)=\tau(u)$ for every $u\in V(P)$.

Now, for any two incomparable vertices $u$ and $v$ of $P$, we define the functions $\psi(*)$ on an additional coordinate different from $x_0,\ldots,x_k$ and from the coordinates chosen for the other pairs $u,v$, to ensure that $\psi(u)$ and $\psi(v)$ intersect.
The set $X=\{x\in[0,1]:\Reg(u)\cap\Reg(v)\cap(\{x\}\times\setR)\neq\emptyset\}$ is non-empty.
If $X$ has an isolated point $x$, then we have $u,v\in R$ and $\phi(u)(x)=\phi(v)(x)$.
Otherwise, $X$ is an infinite subset of $[0,1]$.
Choose $x\in[0,1]$ and $y\in\setR$ so that $(x,y)\in\Reg(u)\cap\Reg(v)$ and $x$ is different from $x_0,\ldots,x_k$ and every other $x$ already chosen for a different pair $u,v$.
Let $S=R\cup\{u,v\}$.
Define a function $\sigma:S\to\setR$ by setting $\sigma(u)=\sigma(v)=y$ and $\sigma(a)=\phi(a)(x)$ for every $a\in R$.
It follows from the definition of $\Reg(u)$ and $\Reg(v)$ that $\sigma$ is $P$-compatible.
By Lemma \ref{lem:functions}, there is a $P$-compatible function $\tau:V(P)\to\setR$ with $\tau|_S=\sigma$.
Set $\psi(w)(x)=\tau(w)$ for every $w\in V(P)$.
This ensures that $\psi(u)$ and $\psi(v)$ intersect at $(x,y)$.

We have defined the functions $\psi(*)$ on a finite set of coordinates $X$.
Now, we extend them to the entire $[0,1]$ so that each of them is linear between any two consecutive coordinates from $X$.
It is clear from the construction that $\psi(a)=\phi(a)$ for every $a\in R$.
Since all functions $\tau:V(P)\to\setR$ used for defining $\psi(*)$ on the coordinates from $X$ are $P$-compatible, $u<_Pv$ implies $\psi(u)<\psi(v)$.
Finally, we have guaranteed that $\psi(u)$ and $\psi(v)$ intersect for any two incomparable vertices $u$ and $v$ of $P$.
Thus $\psi$ is a representation of $P$ extending~$\phi$.\qed
\end{proof}

\begin{proof}[Lemma \ref{lem:reductions}]
It suffices that we consider only single steps of reductions 1--3.
Clearly, if $G$ has a representation extending $\phi$, then $G'$ has a representation extending $\phi$, as all reductions only remove vertices from $G$.
To prove the converse implication, suppose that $G'$ has a representation extending $\phi$.
By Lemma \ref{lem:regions}, this implies that $G'$ has a transitive orientation $P'$ that respects $\phi$ and satisfies $\Reg_{P'}(u)\cap\Reg_{P'}(v)\neq\emptyset$ for every two incomparable vertices $u$ and $v$.
We show that $G$ has a transitive orientation $P$ with the same property, which again by Lemma \ref{lem:regions} yields the conclusion.

Suppose that $G'$ is obtained from $G$ by a single step of reduction~1.
Let $M$ and $u$ be the module and the vertex chosen for the reduction step.
Choose a transitive orientation $P$ of $G$ that induces $P'$ on $V(G')$ and arbitrarily orients the internal nodes removed from the modular decomposition by the reduction.
By Theorem \ref{thm:orientation}, for every $v\in V(G)-V(G')$ we have
\begin{align*}
\{a\in R:a<_Pv\}&=\{a\in R:a<_Pu\},\\
\{a\in R:a>_Pv\}&=\{a\in R:a>_Pu\},
\end{align*}
and therefore $\Reg_P(v)=\Reg_P(u)=\Reg_{P'}(u)$.
This implies that $\Reg_P(u)\cap\Reg_P(v)\neq\emptyset$ is satisfied for any two non-adjacent $u,v\in V(G)$.

Now, suppose that $G'$ is obtained from $G$ by a single step of reduction~2.
Let $M$ and $N$ be the modules chosen for the reduction step.
Choose a transitive orientation $P$ of $G$ that induces $P'$ on $V(G')$, orients $M$ so that $N'<_PM'\iff N<_{P'}M'$ for every represented child $M'$ and every non-represented child $N'$ of $M$ in the modular decomposition of $G$, and arbitrarily orients the internal nodes removed from the modular decomposition by the reduction.
Again, by Theorem \ref{thm:orientation}, for every $v\in V(G)-V(G')$ we have
\begin{align*}
\{a\in R:a<_Pv\}&=\{a\in R:a<_Pu\},\\
\{a\in R:a>_Pv\}&=\{a\in R:a>_Pu\},
\end{align*}
which yields the same conclusion as for reduction~1.

Finally, suppose that $G'$ is obtained from $G$ by a single step of reduction~3.
Let $M$ the module chosen for the reduction step.
Let $M_1$ and $M_2$ be any two represented children of $M$ in the modular decomposition of $G$ that are consecutive in the order $<_{P'}$ on the represented children of $M$.
Choose a transitive orientation $P$ of $G$ that induces $P'$ on $V(G')$, orients $M$ so that $M_1<_PN<_PM_2$ for every non-represented child $N$ of $M$ in the modular decomposition of $G$, and arbitrarily orients the internal nodes removed from the modular decomposition by the reduction.
Let $a_1\in M_1\cap R$, $a_2\in M_2\cap R$, and $v\in V(G)-V(G')$.
By Theorem \ref{thm:orientation}, we have
\begin{align*}
\{a\in R:a<_Pv\}&=\{a_1\}\cup\{a\in R:a<_Pa_1\},\\
\{a\in R:a>_Pv\}&=\{a_2\}\cup\{a\in R:a>_Pa_2\}.
\end{align*}
This implies that $\Reg_P(v)$ is the region contained between $\phi(a_1)$ and $\phi(a_2)$.
In particular, for every other $w\in V(G)-V(G')$ we have $\Reg_P(v)=\Reg_P(w)$.
For every $w\in V(G')$ that is incomparable to $u$ in $P$ we have $\Reg_P(a_1)\cap\Reg_P(w)=\Reg_{P'}(a_1)\cap\Reg_{P'}(w)\neq\emptyset$ and $\Reg_P(a_2)\cap\Reg_P(w)=\Reg_{P'}(a_2)\cap\Reg_{P'}(w)\neq\emptyset$.
Thus we also have $\Reg_P(u)\cap\Reg_P(w)\neq\emptyset$.\qed
\end{proof}

\begin{proof}[Lemma \ref{lem:no-of-orientations}]
Let $M$ be a non-singleton strong module of $G$.
If $M$ is parallel, then it has just one transitive orientation at all.
If $M$ is serial and has no non-represented child, then it has only one transitive orientation respecting $\phi$, as $\phi$ determines the order of representants of the children of $M$.
If $M$ is serial and has one represented child and one non-represented child, then it has just two transitive orientations, one being the reverse of the other, both respecting $\phi$.
Finally, if $M$ is prime, then by Theorem \ref{thm:prime} it has exactly two transitive orientations, one being the reverse of the other.
In the latter case, if $M$ has two adjacent represented children, then $\phi$ determines the order of their representants, which shows that only one of the two transitive orientations of $M$ respects $\phi$.
Otherwise, both transitive orientations of $M$ respect $\phi$.\qed
\end{proof}

\begin{proof}[Lemma \ref{lem:regions-partial}]
The proof is exactly the same as the proof of Lemma \ref{lem:regions}, assuming that by $R$ occurring in the context of a coordinate $x\in[0,1]$ we mean the set $\{a\in V(P):x\in\dom\phi(a)\}$.\qed
\end{proof}

\section{Detailed proof of Theorem \ref{thm:partial_fun_ext}}

Below, we describe the construction of the graph $G$ in the reduction from 3-SAT in Section~\ref{sec:partial-functions}.
For simplicity, we assume that $m,n\geq 2$ and the literals $\alpha_j^1,\alpha_j^2,\alpha_j^3$ within each clause $\alpha_j$ are distinct.

The vertex set of $G$ consists of a set $X=X_1\cup\ldots\cup X_n$ of vertices corresponding to variables,
a set $A=A_1\cup\ldots\cup A_m$ of vertices corresponding to clauses, and two special vertices $p$ and $q$, see Fig.\ \ref{fig:graph}.
Each set $X_i$ corresponding to the variable $x_i$ contains six vertices partitioned into two groups:
\[X_i^+=\{x_{i,1}^+,x_{i,2}^+,x_{i,3}^+\},\quad X_i^-=\{x_{i,1}^-,x_{i,2}^-,x_{i,3}^-\}.\]
Each set $A_j$ corresponding to the clause $\alpha_j$ contains thirteen vertices partitioned into four groups:
\begin{alignat*}{2}
A_j^k&=\{\alpha_{j,1}^k,\alpha_{j,2}^k,\alpha_{j,3}^k\}&\quad&\text{for }k\in\{1,2,3\},\\
T_j&=\{t_{j,1},t_{j,2},t_{j,3},t_{j,4}\}.
\end{alignat*}
The edge set of $G$ is defined so that
\begin{itemize}
\item the vertices from $X_i^+$ are adjacent to the vertices from $X_i^-$,
\item the vertices from $A_j^k$ are adjacent to the vertices from $A_j^\ell$ for $k\neq\ell$,
\item the vertices from $A_j^k$ are adjacent to the vertices from $T_j$,
\item $p$ is adjacent to the entire $A$,
\item $q$ is adjacent to the entire $X$,
\item $p$ is adjacent to $q$,
\item there are no other adjacencies.
\end{itemize}
The modular decomposition of $G$ looks as follows:
\begin{itemize}
\item $V(G)$ is a prime module with children $X$, $A$, $\{p\}$, and $\{q\}$;
\item $X$ is a parallel module with children $X_1,\ldots,X_n$;
\item each $X_i$ is a serial module with children $X_i^+$ and $X_i^-$;
\item each $X_i^+$ is a parallel module with children $\{x_{i,1}^+\}$, $\{x_{i,2}^+\}$, and $\{x_{i,3}^+\}$;
\item each $X_i^-$ is a parallel module with children $\{x_{i,1}^-\}$, $\{x_{i,2}^-\}$, and $\{x_{i,3}^-\}$;
\item $A$ is a parallel module with children $A_1,\ldots,A_m$;
\item each $A_j$ is a serial module with children $A_j^1$, $A_j^2$, $A_j^3$, and $T_j$;
\item each $A_j^k$ is a parallel module with children $\{\alpha_{j,1}^k\}$, $\{\alpha_{j,2}^k\}$, and $\{\alpha_{j,3}^k\}$;
\item each $T_j$ is a parallel module with children $\{t_{j,1}\}$, $\{t_{j,2}\}$, $\{t_{j,3}\}$, and $\{t_{j,4}\}$.
\end{itemize}

Now, we construct a partial representation $\phi:V(G)\to\calF^\circ$ of $G$ (see Fig.~\ref{fig:clause}).
For a set $S\subset V(G)$, let $\dom\phi(S)=\bigcup_{u\in S}\dom\phi(u)$.
For $u\in X$, we choose $\dom\phi(u)$ to be intervals of equal length such that
\begin{itemize}
\item $\dom\phi(x_{i,1}^+)\cap\dom\phi(x_{i,2}^+)$ and $\dom\phi(x_{i,2}^+)\cap\dom\phi(x_{i,3}^+)$ are singletons, and $\dom\phi(x_{i,1}^+)$ is entirely to the left of $\dom\phi(x_{i,3}^+)$;
\item $\dom\phi(x_{i,1}^-)\cap\dom\phi(x_{i,2}^-)$ and $\dom\phi(x_{i,2}^-)\cap\dom\phi(x_{i,3}^-)$ are singletons, and $\dom\phi(x_{i,1}^-)$ is entirely to the left of $\dom\phi(x_{i,3}^-)$;
\item the intervals $\dom\phi(X_1^+),\dom\phi(X_1^-),\ldots,\dom\phi(X_n^+),\dom\phi(X_n^-)$ are pairwise disjoint and contained in $(0,1)$.
\end{itemize}
Next, we define $\dom\phi(u)$ for $u\in V(G)-X$ as follows:
\begin{itemize}
\item if $\alpha_j^k=x_i$, then $\dom\phi(\alpha_{j,r}^k)=\dom\phi(x_{i,r}^+)$ for $r\in\{1,2,3\}$;
\item if $\alpha_j^k=\neg x_i$, then $\dom\phi(\alpha_{j,r}^k)=\dom\phi(x_{i,r}^-)$ for $r\in\{1,2,3\}$;
\item $\dom\phi(t_{j,1})$, $\dom\phi(t_{j,2})$, $\dom\phi(t_{j,3})$, and $\dom\phi(t_{j,4})$ are the closures of the four open intervals that the set $(0,1)-\dom\phi(A_j^1\cup A_j^2\cup A_j^3)$ splits into;
\item $\dom\phi(p)=\dom\phi(q)=[0,1]$.
\end{itemize}
Now, we define all functions $\phi(u)$ for $u\in V(G)$ to be constant, as follows:
\begin{itemize}
\item $\phi(u)\equiv 0$ for $u\in X$;
\item $\phi(p)\equiv 1$;
\item $\phi(q)\equiv 2$;
\item $\phi(u)\equiv 3$ for $u\in T_1\cup\ldots\cup T_m$;
\item $\phi(u)\equiv 4$ for $u\in A_1^1\cup A_1^2\cup A_1^3\cup\ldots\cup A_m^1\cup A_m^2\cup A_m^3$.
\end{itemize}
We can easily check that $\phi$ is indeed a partial representation of $G$.

It remains to prove that $\alpha$ is satisfiable if and only if $\phi$ is extendable.
To this end, we apply Lemma \ref{lem:regions-partial} and prove that $\alpha$ is satisfiable if and only if $G$ has a transitive orientation $P$ that respects $\phi$ and satisfies $\Reg_P(u)\cap\Reg_P(v)\neq\emptyset$ for any two incomparable vertices $u$ and $v$.

We know from Theorem \ref{thm:correspondence} that all transitive orientations of $G$ are obtained by transitively orienting the strong modules of $G$.
All transitive orientations of the strong modules of $G$ respecting $\phi$ have the following in common (see Fig.\ \ref{fig:graph}):
\begin{itemize}
\item the module $V(G)$ is oriented so that $\{p\}<_PA_j$, $\{p\}<_P\{q\}$, and $X_i<_P\{q\}$;
\item each module $A_j$ is oriented so that $T_j<_PA_j^1,A_j^2,A_j^3$.
\end{itemize}
Thus choosing a transitive orientation of $G$ respecting $\phi$ is equivalent to fixing the order of $A_j^1,A_j^2,A_j^3$ in every $A_j$ and choosing one of the two orientations of every $X_j$.
Figures \ref{fig:region-clause} and \ref{fig:region-variable} illustrate how the regions of the vertices depend on the chosen orientation.

Suppose that $\alpha$ is satisfied by a valuation $\Gamma$ of the variables.
This means that in each clause $\alpha_j$ at least one of the literals $\alpha_j^1,\alpha_j^2,\alpha_j^3$ is true in $\Gamma$.
We choose transitive orientations of the strong modules of $G$ (and thus a transtive orientation $P$ of $G$) so that they respect $\phi$ and satisfy the following:
\begin{itemize}
\item if $x_i$ is true in $\Gamma$, then $X_i^+<_PX_i^-$, otherwise $X_i^-<_PX_i^+$;
\item $A_j^{k_j}>_PA_j^\ell$ for $\ell\neq k_j$, where $\alpha_j^{k_j}$ is a literal that is true in $\Gamma$.
\end{itemize}
The region $\Reg_P(x_{i,r}^+)$ or $\Reg_P(x_{i,r}^-)$ has non-empty intersection with $\Reg_P(\alpha_{j,s}^k)$ inside $\dom\phi(\alpha_{j,t}^{k_j})$, for $\{r,s,t\}=\{1,2,3\}$ and any $k$.
The regions $\Reg_P(u)$ and $\Reg_P(v)$ for other choices of $u$ and $v$ intersect regardless of which transitive orientation $P$ we choose.

Now, suppose that $P$ is a transitive orientation of $G$ with the property that $\Reg_P(u)\cap\Reg_P(v)\neq\emptyset$ for any two incomparable vertices $u$ and $v$.
We define a valuation $\Gamma$ as follows: $x_i$ is true if and only if $X_i^+<_PX_i^-$.
Let $A_j^k$ be the greatest of the modules $A_j^1,A_j^2,A_j^3$ with respect to $<_P$.
We show that the literal $\alpha_j^k$ is true in $\Gamma$.
The regions $\Reg_P(\alpha_{j,*}^k)$ go below the horizontal line $\phi(q)$ only inside $\dom\phi(A_j^k)\times\setR$, elsewhere they are blocked from below by functions $\phi(t_{j,*})$ and $\phi(\alpha_{j,*}^\ell)$ for $\ell\neq k$.
Thus $\Reg_P(\alpha_{j,*}^k)$ can intersect $\Reg_P(x_{i,*}^+)$ or $\Reg_P(x_{i,*}^-)$ only inside $\dom\phi(A_j^k)\times\setR$.
Suppose $\alpha_j^k=x_i$.
It follows that $\dom\phi(A_j^k)=\dom\phi(X_i^+)$.
If $X_i^-<_PX_i^+$, then the parts of $\Reg_P(x_{i,*}^-)$ inside $\dom\phi(X_i^+)$ are blocked from above by functions $\phi(x_{i,*}^+)$ and thus lie entirely below the horizontal line $\phi(p)$, which contradicts the assumption that $\Reg_P(\alpha_{j,*}^k)\cap\Reg_P(x_{i,*}^-)\neq\emptyset$.
Thus we have $X_i^+<_PX_i^-$, which means that the literal $\alpha_j^k=x_i$ is true in $\Gamma$.
An analogous argument shows that if $\alpha_j^k=\neg x_i$ then $\alpha_j^k$ is true in $\Gamma$.
This completes the proof.

\end{document}